\newtheorem{remark}{Remark}[section]
\newtheorem{prop}{Proposition}[section]
\newtheorem{lemma}{Lemma}[section]
\newcommand{\bX}{{\mathbb X}}
\newcommand{\cK}{{\mathcal K}}
\newcommand{\cH}{{\mathcal H}}
\newcommand{\Id}{\mathsf{\mathrm{I}}}
\newcommand{\ds}{\ensuremath{\displaystyle}}
\newcommand{\dy}{\ensuremath{\mathrm{d} y}}
\newcommand{\de}{\ensuremath{\mathrm{d}}}
\newcommand{\T}{{\mathbb T}}
\newcommand{\R}{{\mathbb R  }}
\newcommand{\cA}{\mathcal A}
\newcommand{\cS}{\mathcal S}
\newcommand{\cL}{\mathcal L}
\newcommand{\hcL}{\hat{\cL}}
\newcommand{\eps}{\epsilon}
\newcommand{\EXP}[1]{\ensuremath{\mathrm{e}^{#1}}}
\newcommand{\minus}{-}
\newcommand{\plus}{+}
\newcommand{\Dmat}{\mathsf{\mathrm{D}}}
\newcommand{\Dmato}{\Dmat_0}
\newcommand{\tp}{t^{\prime}}
\newcommand{\Ran}{\,\mathsf{\mathrm{Ran}}\,}
\begin{document}
\setlength{\baselineskip}{15pt}
\title{Corrections to Einstein's relation for Brownian motion in a tilted periodic potential}
\author{J.C. Latorre \\
        Department of Mathematics and Computer Science,\\ Freie Universit\"{a}t Berlin, Germany \\
      G.A. Pavliotis\footnote{Corresponding author.
E-mail address: g.pavliotis@imperial.ac.uk} \\
        Department of Mathematics\\
    Imperial College London \\
        London SW7 2AZ, UK \\
        and  \\
         Department of Mathematics and Computer Science,\\ Freie Universit\"{a}t Berlin, Germany \\
       P.R. Kramer\\
 Mathematical Sciences Department,\\ Rensselaer Polytechnic Institute, \\ 
        110 8th Street, Troy, New York 12180, USA
                    }
           
\maketitle

\begin{abstract}
In this paper we revisit the problem of Brownian motion in a tilted periodic potential. We use homogenization theory to derive general formulas for the effective velocity and the effective diffusion tensor that are valid for arbitrary tilts. Furthermore, we obtain power series expansions for the velocity and the diffusion coefficient as functions of the external forcing. Thus, we provide systematic corrections to Einstein's formula and to linear response theory. Our theoretical results are supported by extensive numerical simulations. For our numerical experiments we use a novel spectral numerical method that leads to a very efficient and accurate calculation of the effective velocity and the effective diffusion tensor.
\end{abstract}
%
%
\section{Introduction}
\label{sec:intro}

Brownian motion in periodic potentials is one of standard models in condensed matter physics. Applications include the modeling of Josephson junctions, polymer dynamics, superionic conduction, dielectric relaxation, plasma physics and surface diffusion~\cite{Ferrando_al02}. A detailed discussion and extensive bibliography can be found in~\cite{coffey04, Ris84}. 

The goal of this paper is to study Brownian motion in a tilted periodic potential for arbitrary values of the drift and of the tilt (external forcing). The dynamics of the Brownian particle is governed by the Langevin equation
\begin{equation}\label{e:langevin_intro}
\ddot{\boldsymbol{q}} = - \boldsymbol{\nabla}_{\boldsymbol{q}} V(\boldsymbol{q}) +\boldsymbol{F} - \gamma \dot{\boldsymbol{q}} + \sqrt{2 \gamma \beta^{-1}} \dot{\boldsymbol{W}},
\end{equation}
where $V$ is a periodic potential with period $L$ (in each direction), $\boldsymbol{F}$ denotes a constant external force, so that the effective potential is 
\begin{equation}
V_{\mbox{eff}}(\boldsymbol{q}) = V(\boldsymbol{q}) - \boldsymbol{F}\cdot \boldsymbol{q}, \label{e:effpot}
\end{equation}
 $\gamma$ is the friction coefficient, $\beta$ is the inverse temperature
and $\boldsymbol{W}(t)$ is a standard Brownian motion on $\R^d$. 

The main objective is the calculation of the drift and diffusion coefficients which are defined as
\begin{equation}
\boldsymbol{U} = \lim_{t \rightarrow \infty} \frac{\langle \boldsymbol{q}(t) - \boldsymbol{q}(0) \rangle}{t}
\label{e:drift}
\end{equation}
and
\begin{equation}
\boldsymbol{\Dmat} = \lim_{ t \rightarrow \infty} \frac{\langle \boldsymbol{q}(t) - \langle \boldsymbol{q}(t)
 \rangle ) \otimes ( \boldsymbol{q}(t) - \langle \boldsymbol{q}(t)  \rangle ) \rangle}{2 t}.
\label{e:diff}
\end{equation}
Here $\langle \cdot \rangle$ denotes ensemble average and $\otimes$ stands for the tensor product. Explicit formulas for these coefficients are available only in  
 the overdamped limit and mostly in one dimension.  An exact analytical formula for the effective velocity of an overdamped Brownian particle moving in a one dimensional tilted periodic potential was obtained many years ago by Stratonovich~(\cite{Straton58},\cite[Ch. 9]{Straton67}). A corresponding analytical formula for the diffusion coefficient was obtained and analyzed more recently~\cite{reimann_al01, reimann_al02, pavl05}, and verified in an experimental realization of the model involving rotating optical tweezers~\cite{EvstigReimann2008}.  Simpler algebraic formulas were deduced from these for the special case of piecewise linear potentials in~\cite{Heinsalu2004}.
Only potentials with very specific geometries can lead to analytical formulas in dimension higher than one~\cite{ferrando_all92, Caratti_all96}. A wealth of information on the problem of Brownian motion in a tilted periodic potential in one dimension can be found in~\cite[Ch. 11]{Ris84}.

It is well known that the equilibrium diffusion coefficient (i.e., the diffusion coefficient in the absence of an external drift) and the drift, or, rather, the mobility are related through the famous Einstein formula:
\begin{equation}\label{e:einstein}
\boldsymbol{\Dmato} = \beta^{-1} \boldsymbol{\mu},
\end{equation}
with 
$$
\boldsymbol{\mu} = \lim_{|\boldsymbol{F}| \rightarrow 0} \boldsymbol{\nabla}_{\boldsymbol{F}} \boldsymbol{U}
$$
The validity of this formula has been proved rigorously for several models~\cite{lebo_einstein}, including that of a Brownian particle in a tilted periodic potential~\cite{rodenh}. Formulas of the form~\eqref{e:einstein} can be understood in the more general framework of linear response theory and of the Green-Kubo formalism~\cite{ResibDeLeen77, KuboTodaHashitsume91}. A recent rigorous analysis of the Green-Kubo formalism for the calculation of the shear viscosity can be found in~\cite{JoubaudStoltz2011}.

The main goal of the present paper is to investigate the validity and usefulness of corrections to linear response theory. In particular, we calculate all terms in the power series expansions (with respect to the forcing $F$) for the drift and diffusion coefficients and we use these in order to calculate corrections to Einstein's formula~\eqref{e:einstein}. Our analysis is based on the formalism of averaging and homogenization~\cite{PavlSt08}. From this formalism we know that both drift and diffusion coefficients can be expressed in terms of the solution of appropriate Poisson equations \eqref{e:fp_station}-\eqref{e:cell}.
Details are presented in the next section. 

For simplicity of notation and presentation, we will restrict our calculations for corrections to linear response theory to the one dimensional case $ d=1 $, and hence hereafter drop vector and tensor notation.  Completely analogous formulas are applicable in multiple dimensions. 
We present our results in detail in Section~\ref{sec:lin_resp}, and summarize them here rather imprecisely:

\begin{prop}\label{prop:main}
The drift and diffusion coefficients admit the asymptotic expansions 
\begin{equation}\label{e:expansion}
U =  \sum_{\ell \geq 1} F^{\ell} V_{\ell} \quad ,
\end{equation}
and
\begin{subequations}
\begin{eqnarray}\label{e:deff_coeffs}
D &=&  \sum_{\ell \geq 0} F^{\ell} \left[\beta^{-1}V_{\ell+1} +\sum_{n=1}^{\ell} \Sigma_{n\ell}\right] \\
&=& \beta^{-1} \frac{d U}{dF} + \sum_{\ell \geq 1} F^{\ell} \sum_{n=1}^{\ell} \Xi_{n \ell}.
\end{eqnarray}
\end{subequations}

The coefficients $ V_{j}, \, \Sigma_{nj}, \, \Xi_{nj}; n, \, j=1,2,\dots$  can be computed in terms of solutions to Poisson equations for the generator of the equilibrium dynamics $F=0$.  In particular, the higher order corrections to the drift and diffusion coefficients are not compatible with an extension of the Einstein relation (\ref{e:einstein}) beyond the linear response regime $ F \rightarrow 0 $. 

For the case of a symmetric potential ($ V(q) =V(-q) $), then $ V_n = 0 $ for even $ n $ and $ \Sigma_{n \ell} = \Xi_{n \ell} = 0 $  for odd $ \ell $. 
\end{prop}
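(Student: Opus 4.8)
The plan is to exploit the single discrete symmetry behind the statement: the reflection $R$ that reverses position, $(Rf)(q)=f(-q)$ (together with momentum reversal $p\mapsto-p$ in the inertial formulation), combined with the sign change $F\mapsto-F$. Writing the $F$-dependent generator as $\cL_F=\cL_0+F\cL_1$, where $\cL_1$ is the forcing operator ($\partial_q$ in the overdamped description, $\partial_p$ in the inertial one), the first step is the conjugation identity
\begin{equation*}
R\,\cL_F\,R=\cL_{-F}.
\end{equation*}
This reduces to $R\cL_0 R=\cL_0$ and $R\cL_1 R=-\cL_1$: since $V$ is even, $V'$ is odd, so the equilibrium part is invariant under $R$ while the forcing term changes sign (the diffusion part is invariant). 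Because $R=R^{-1}$ is self-adjoint on the relevant $L^2$ space, the same identity holds for the adjoint (Fokker--Planck) operators, and hence $\cL_F R=R\,\cL_{-F}$.

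Next I would transport this symmetry through the perturbation hierarchy in $F$. The normalized stationary density solves $\cL_F^{*}\rho_F=0$; uniqueness together with the conjugation identity forces $\rho_{-F}(-q)=\rho_F(q)$, and expanding $\rho_F=\sum_k F^k\rho_k$ pins down the parity of each coefficient as $\rho_k(-q)=(-1)^k\rho_k(q)$. Let $\xi$ denote the velocity observable ($-V'(q)+F$ in the overdamped case, the momentum in the inertial case); it is odd under the combined operation, so its order-$a$ coefficient $\xi_a$ has parity $(-1)^{a+1}$. The corrector solving the cell problem $\cL_F\phi_F=U-\xi$ then satisfies $\phi_{-F}(-q)=-\phi_F(q)$ — the right-hand side is odd under the combined operation and the kernel of $\cL_F$ consists only of constants, which are fixed by the normalization — so its coefficients obey $\phi_k(-q)=(-1)^{k+1}\phi_k(q)$. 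The order-$a$ coefficient of $\xi-U$ shares this parity $(-1)^{a+1}$, using $U(-F)=-U(F)$, established below.

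The conclusion then follows by a term-by-term integration over the symmetric domain. For the drift, $U=\int\xi\,\rho_F$ contributes at order $F^\ell$ a sum of integrals $\int\xi_a\,\rho_{\ell-a}$, whose integrands have parity $(-1)^{a+1}(-1)^{\ell-a}=(-1)^{\ell+1}$; this is odd for even $\ell$, so $V_\ell=0$ for even $\ell$ (and, summing, $U(-F)=-U(F)$). For the diffusion coefficient I would insert the expansions of $\rho_F$, $\phi_F$ and $\xi-U$ into the homogenization formula read off in Section~\ref{sec:lin_resp} and identify the resulting integrals with the coefficients $\Sigma_{n\ell}$ and $\Xi_{n\ell}$ in \eqref{e:deff_coeffs}. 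Each such term is a single integral of a product of one factor from $\xi-U$ (of order $a$), one corrector coefficient $\phi_b$, and one density coefficient $\rho_c$ with $a+b+c=\ell$; the parity of the integrand is $(-1)^{a+1}(-1)^{b+1}(-1)^c=(-1)^\ell$, so for odd $\ell$ every term is odd and integrates to zero, giving $\Sigma_{n\ell}=\Xi_{n\ell}=0$ for all $n$.

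The substantive point — and the only real obstacle — is the bookkeeping in the last step: one must verify, against the explicit definitions in Section~\ref{sec:lin_resp}, that each $\Sigma_{n\ell}$ and $\Xi_{n\ell}$ really is a single integral of a fixed-parity integrand, rather than a sum of terms that cancel only collectively. The soft global-symmetry argument alone (using only $U(-F)=-U(F)$ and $D(-F)=D(F)$) yields merely $V_{\text{even}}=0$ and the vanishing of the \emph{summed} odd-order coefficient of $D$; the stronger, coefficient-by-coefficient statement requires the definite parity of every building block in the recursive expansion, which is exactly what the identities $\rho_k(-q)=(-1)^k\rho_k(q)$ and $\phi_k(-q)=(-1)^{k+1}\phi_k(q)$ supply.
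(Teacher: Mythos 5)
Your reflection-symmetry argument is correct, and it is essentially the argument the paper itself gives (in the remark following Proposition~\ref{prop:detailed}): the joint reflection $(q,p,F)\mapsto(-q,-p,-F)$ conjugates the generator, uniqueness of the stationary density and of the mean-zero correctors forces a definite parity on each coefficient $f_j$ and $\phi_j$ of the hierarchies, and each of $V_\ell$, $\Sigma_{n\ell}$, $\Xi_{n\ell}$ is a single integral of a fixed-parity integrand, so the vanishing claims hold coefficient by coefficient. For the final sentence of the proposition you and the paper are therefore on the same path.

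The gap is that this is \emph{only} the final sentence. The proposition's main assertions are the expansions \eqref{e:expansion} and \eqref{e:deff_coeffs}--\eqref{e:deff_coeffs2} themselves, with coefficients computable from equilibrium ($F=0$) Poisson equations, and your proposal treats these as given (``insert the expansions \ldots and identify the resulting integrals''), which is circular in a blind proof. Concretely, two things must be proved and are absent. First, solvability of the corrector hierarchy \eqref{e:phi_j}: the equation $-\cL_0\phi_j = a^-\phi_{j-1}-V_j$ is solvable only if $V_j=\langle a^-\phi_{j-1},\mathbf{1}\rangle_\beta$, and this is not automatic because $V_j$ is defined from the density hierarchy ($V_j=\langle p,f_j\rangle_\beta$), not from $\phi_{j-1}$. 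Second, the restructured expansion \eqref{e:deff_coeffs2}, $D=\beta^{-1}\,dU/dF+\sum_{\ell}F^\ell\sum_{n}\Xi_{n\ell}$, does not follow by mere insertion: direct insertion yields only the double sum $\sum_{r}F^r\sum_{n}\int p\,\phi_{r-n}f_n\bar\rho\,dp\,dq$, and extracting the derivative term requires $\langle a^-\phi_{\ell-n},f_n\rangle_\beta=\langle a^-\phi_\ell,\mathbf{1}\rangle_\beta=V_{\ell+1}$ for every $0\le n\le\ell$, equivalently $\sum_{n=1}^{\ell}\bigl(\Sigma_{n\ell}-\Xi_{n\ell}\bigr)=\ell\,\beta^{-1}V_{\ell+1}$. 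This ``shift identity'' is exactly \eqref{eq:shiftphif} of Lemma~\ref{lemma:vel}, proved in the paper by an induction that interleaves it with the solvability statement; it is the technical heart of the proof, and it also underlies the $n=0$ identification $\int p\,\phi_\ell\,\bar\rho\,dp\,dq=\beta^{-1}V_{\ell+1}$ already needed for the first form \eqref{e:deff_coeffs}. Without it, the proposition's central claim --- that the failure of the extended Einstein relation \eqref{e:exeinstein} is quantified exactly by the $\Xi_{n\ell}$ --- is unproven. Your closing paragraph calls parity bookkeeping ``the only real obstacle''; in fact the obstacle is this structural identity linking the $f$- and $\phi$-hierarchies, which no symmetry argument supplies.
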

Thus, it is possible, in principle, to calculate the drift and diffusion coefficients of the nonequilibrium dynamics~\eqref{e:langevin_intro} in terms of the equilibrium dynamics
\begin{equation}\label{e:langevin_eq}
\ddot{q} = - V^{\prime}(q) - \gamma \dot{q} + \sqrt{2\gamma \beta^{-1}} \dot{W}
\end{equation}
for at least a finite interval of values of $F$.

The validity and usefulness of the power series expansions~\eqref{e:expansion} is tested by performing numerical experiments. For the calculation of drift and diffusion coefficients we need to solve Poisson equations of the form
\begin{equation}\label{e:poisson_introII}
- \cL \phi = f(p,q),
\end{equation}
with $\cL$ being the generator of the Markov process $\{q, \, p \}$ with $p = \dot{q}$. We solve equations of the form~\eqref{e:poisson_introII} using a spectral method~\cite{PavlVog08} that is an extension of Risken's continued fraction expansion method~\cite{Ris84}. By comparing the results obtained using our spectral method with results obtained from (the computationally more expensive) Monte Carlo simulations, we demonstrate that our method performs very well.

The rest of the paper is organized as follows. In Section~\ref{sec:lang} we present the formulas for the drift and diffusion coefficients obtained using homogenization theory. In Section~\ref{sec:lin_resp} we calculate the power series expansions for the drift and the diffusion coefficient. In Section~\ref{sec:numerics} we present results of numerical simulations on the calculation of $U$ and $D$. Section~\ref{sec:conclusions} summarizes our conclusions. The details of the spectral method for the solution to the Poisson equation are presented in Appendix~\ref{sec:num_alg}.  Some discussion of how our formulas relate to an alternative approach developed in~\cite{BaiesiMaes11} can be found in Appendix~\ref{sec:appmaes}. 

%
%
%
%
\section{Formulas for the Drift and Diffusion Coefficients}
\label{sec:lang}

We start by writing~\eqref{e:langevin_intro} as a first order system, in $ d=1 $ dimension:
\begin{equation}
\dot{q} = p, \quad \dot{p} = -  V^{\prime}(q) +F - \gamma p + \sqrt{2 \gamma \beta^{-1}} \dot{W}.  \label{e:sdepq}
\end{equation}
The process $\{q, \, p \}$ is a Markov process with generator
\begin{equation}\label{e:generator}
\cL = p \cdot \partial_q + (-\partial_q V + F) \cdot \partial_p + \gamma \big(
- p \cdot \partial_p + \beta^{-1} \partial^{2}_p \big).
\end{equation}
The Fokker-Planck operator, i.e. the $L^{2}$--adjoint of the generator, is
\begin{equation}\label{e:fokker_planck}
\cL^{*} = -p \cdot \partial_q + (\partial_q V - F) \cdot \partial_p + \gamma \partial_{p} \cdot \big(p + \beta^{-1} \partial_p \big).
\end{equation}
The potential function $ V $ has period $L $.  
We can use homogenization theory~\cite{pavl05, HairPavl04, PavSt05b} to prove that the rescaled process
\begin{equation}\label{e:resc}
q^{\eps}(t):= \eps q(t/\eps^{2}) - \frac{t U}{\eps},
\end{equation}
where $U$ is the effective drift as defined below, converges weakly on $C([0,T] ; \R)$ to a Brownian motion with diffusion coefficient $D$. To write down the formulas for the drift $U$ and the diffusion coefficient $D$ we need to consider the process $\boldsymbol{X} (t) = (q(t), \,p(t))$ defined on $\bX :=  \T \times \R$ where $\T$ denotes a one-dimensional circle with length $L $ corresponding to the period of the potential $ V$.  The generator and Fokker-Planck operator of this process are still given by formulas~\eqref{e:generator} and~\eqref{e:fokker_planck} but now restricted on $\bX$, with periodic boundary conditions with respect to $q$. It can be shown~\cite{rodenh} that $\boldsymbol{X}(t)$ is an ergodic Markov process with invariant measure $\mu_{\beta}(dp dq)$ that has a smooth density $\rho_{\beta}(p,q)$ with respect to the Lebesgue measure on $\bX$. The invariant density is the unique solution of the stationary Fokker-Planck equation on $\bX$:
\begin{equation}\label{e:fp_station}
\cL^{*} \rho_{\beta} = 0.
\end{equation}
The drift is then given by the average of the momentum with respect to $\rho_{\beta}$ over $\bX$:
\begin{equation}\label{e:drift_defn}
U = \int_{\bX} p \rho_{\beta}(p,q) \, dp dq.
\end{equation}

The diffusion coefficient is given by the formula
\begin{subequations}
\begin{eqnarray}
\label{e:deff}
D  & = & \int_{\bX} (p - U) \, \phi \rho_{\beta}(p,q) \, dp dq 
\\ & = &
\gamma \beta^{-1} \int_{\bX} \left( \partial_{p} \phi \right)^2 \,   \rho_{\beta} (p,q) \, dp dq , \label{e:deff1}
\end{eqnarray}
\end{subequations}
where $\phi$ is the solution of the Poisson equation 
\begin{equation}\label{e:cell}
- \cL \phi = p-U, \quad \int_{\bX} \phi \rho_{\beta}  \, dp \, dq= 0.
\end{equation}
Equations~\eqref{e:fp_station} and~\eqref{e:cell} are equipped with periodic boundary conditions in $q$ and suitable integrability conditions~\cite{pavl05, HairPavl04, PavSt05b}. Formula~\eqref{e:deff1}, which shows that the effective diffusion tensor is positive semidefinite, follows from~\eqref{e:deff} after an integration by parts. 

It is possible to prove that both $U$ and $D$ are analytic functions of the forcing $F$. This has been proved for the drift in~\cite{ColletMatinez2008} (in fact, in this paper the analyticity of the drift with respect to the forcing is proved for several models including systems of coupled Fokker--Planck equations). A similar analysis can be used to prove the analytic dependence of $D$ on $F$.
%
%
%
%
%
\section{Corrections to Linear Response Theory}
\label{sec:lin_resp}
In this section we solve perturbatively equations~\eqref{e:fp_station} and~\eqref{e:cell} in one dimension, in order to obtain the power series expansions~\eqref{e:expansion}. Calculations of this form are quite standard when investigating the effect of colored noise on the drift and diffusion coefficients, e.g.~\cite{HorsLef84, DoerDonKlos98, pavl05}. Related calculations have presented recently in~\cite{JoubaudStoltz2011}.

The main result of this section is a precise formulation of Proposition~\ref{prop:main}. To state the result, we need to introduce some notation. We denote by $H_{0}$ the Hamiltonian of the unperturbed (equilibrium) dynamics~\eqref{e:langevin_intro}:
$$
H_{0}(p,q) = \frac{1}{2} p^{2} + V(q).
$$
The invariant density of the unperturbed dynamics on $\bX$ is denoted by $\bar{\rho}$:
\begin{equation}
\bar{\rho} (q,p) = \frac{1}{Z} e^{-\beta H_{0}(q,p)}, \quad Z = \int_{\bX} e^{- \beta H_{0}(q,p)} \, dq dp.
\label{e:undens}
\end{equation}
We will work in the weighted $L^{2}$ space $\cH =L^{2}(\T \times \R ; \bar{\rho})$. The inner product in this Hilbert space will be denoted by $\langle \cdot, \cdot \rangle_{\beta}$. The generator of the unperturbed dynamics can be written in the form
\begin{equation}\label{e:L0}
\cL_{0} = \cA + \gamma \cS,
\end{equation}
where
$$
\cA = p \partial_q - \partial_q V \partial_p \quad \mbox{and} \quad \cS = -p \partial_p + \beta^{-1} \partial_p^2,
$$
denote the reversible and irreversible parts, respectively. The operators $\cA$ and $\cS$ are antisymmetric and symmetric, respectively, in $\cH$. We introduce now the creation and annihilation operators~\cite{Tit78,Ris84,GlimmJaffe87,HelNi05} 
\begin{equation}\label{e:cre_annh}
a^+ := -\partial_p + \beta p \quad  \mbox{and} \quad a^- :=\partial_p.
\end{equation}
These two operators are $\cH$-adjoint:
$$
\langle a^+ f, h \rangle_\beta = \langle f,a^{-} h \rangle_\beta, \quad \forall \; f, \, h \in \cH.
$$
\begin{prop}\label{prop:detailed}
The drift and diffusion coefficients admit the asymptotic expansions
\begin{equation}\label{e:expansion1}
U =  \sum_{\ell \geq 1} F^{\ell} V_{\ell}  
\end{equation}
and 
\begin{subequations}
\begin{eqnarray}\label{e:deff_coeffs1}
D &=&  \sum_{\ell \geq 0} F^{\ell} \left[\beta^{-1}V_{\ell+1} +\sum_{n=1}^{\ell} \Sigma_{n\ell}\right] \\
&=& \beta^{-1} \frac{d U}{dF} + \sum_{\ell \geq 1} F^{\ell} \sum_{n=1}^{\ell} \Xi_{n \ell}.
\label{e:deff_coeffs2}
\end{eqnarray}
\end{subequations}
The coefficients $ V_{\ell}, \, \Sigma_{k \ell}, \, \Xi_{k \ell} \; k \leq \ell=1,2,\dots$ are given by the formulas 
\begin{subequations}
\begin{eqnarray}
\label{e:vel_j}
V_\ell &=& \int_{\bX}  f_{\ell} p \bar{\rho} \, dp dq = \beta \int \phi_{\ell-1} p \bar{\rho} \, dp dq,
\\
\label{e:deff_nj}
\Sigma_{n \ell} &=& \int_{\bX} p \phi_{\ell-n} f_{n}  \bar{\rho} \, dp dq
\\
\label{e:deff_nj2}
\Xi_{n \ell} &=& \beta^{-1} \int_{\bX}  \phi_{\ell-n} \partial_p f_{ n} \bar{\rho} \, dp dq
\end{eqnarray}
\end{subequations}
where $f_{j}, \, \phi_{j}, \; j = 0, \dots$ are solutions to the (adjoint) Poisson equations
\begin{subequations}\label{e:ll*}
\begin{eqnarray}\label{e:l*}
- \hat\cL_0 f_j &=& a^+ f_{j-1}, \quad f_0 =1, \quad \int f_j \bar{\rho} =0, \;\; \quad j=1, 2, \dots,
\\
-\cL_0 \phi_0 &=& p, \quad \int \phi_0 \bar{\rho} =0, \\
\label{e:l}
- \cL_0 \phi_j &=& a^-\phi_{j-1} - V_j,   \quad
 \int \phi_j \bar{\rho} = - \sum_{r=1}^j \int f_r \phi_{j-r} \bar{\rho}
 \;\;  j=1, \dots
\end{eqnarray}
\end{subequations}
We have used the notation $\hat\cL_0 = -\cA + \gamma \cS$ to denote the $\cH$-adjoint of the generator $\cL_{0}$.
\end{prop}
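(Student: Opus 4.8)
The plan is to expand everything perturbatively in $F$ and match powers. The two objects to expand are the invariant density $\rho_\beta$ (governed by the stationary Fokker-Planck equation $\cL^* \rho_\beta = 0$) and the corrector $\phi$ (governed by the cell/Poisson equation $-\cL \phi = p - U$). First I would write $\cL = \cL_0 + F\,\partial_p$, splitting off the equilibrium generator; the perturbation is the simple operator $\partial_p$. Correspondingly I would posit $U = \sum_{\ell\ge 1} F^\ell V_\ell$, $\phi = \sum_{j\ge 0} F^j \phi_j$, and for the density the ansatz $\rho_\beta = \bar\rho \sum_{j\ge 0} F^j f_j$, i.e. writing the perturbed density as the equilibrium density times a power series with $f_0 = 1$. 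The normalization $\int \rho_\beta = 1$ forces $\int f_j \bar\rho = 0$ for $j \ge 1$.

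Next I would derive the hierarchy for the $f_j$. Plugging the ansatz into $\cL^*(\bar\rho \sum F^j f_j)=0$, I would use the key algebraic identity that conjugating the Fokker-Planck operator by $\bar\rho$ turns it into the $\cH$-adjoint of the generator: that is, $\cL^*(\bar\rho\, g) = \bar\rho\, \hat\cL_0 g$ at zeroth order, and the $F$-perturbation $F\,\partial_q V\cdot\partial_p - F p\cdot\dots$ should reorganize (using $\bar\rho \propto e^{-\beta H_0}$ and the definition $a^+ = -\partial_p + \beta p$) precisely into the creation operator $a^+$ acting on the previous term. Matching the coefficient of $F^j$ then yields $-\hat\cL_0 f_j = a^+ f_{j-1}$, which is exactly~\eqref{e:l*}. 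This is where I would need to be careful: verifying that the perturbation term, after the $\bar\rho$-conjugation, collapses into $a^+$ is the algebraic heart of the first hierarchy, and it relies on the commutation/weighting relations between $\partial_p$, multiplication by $p$, and $e^{-\beta H_0}$.

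For the corrector I would substitute $\phi = \sum F^j \phi_j$ and $U = \sum F^\ell V_\ell$ into $-(\cL_0 + F\partial_p)\phi = p - U$. At order $F^0$ this gives $-\cL_0\phi_0 = p$ (since $V_\ell$ starts at $\ell=1$); at order $F^j$ it gives $-\cL_0\phi_j = \partial_p\phi_{j-1} - V_j = a^-\phi_{j-1} - V_j$, using $a^- = \partial_p$. These are~\eqref{e:l}. The constants $V_j$ are fixed by the Fredholm solvability condition: since $\cL_0$ has one-dimensional kernel (the constants) on $\cH$, the right-hand side $a^-\phi_{j-1} - V_j$ must be orthogonal to the kernel of $\hat\cL_0$, i.e. integrate to zero against $\bar\rho$, which pins down $V_j = \int (\partial_p\phi_{j-1})\,\bar\rho$. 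An integration by parts (moving $\partial_p$ onto $\bar\rho$, producing the factor $\beta p$) recovers the two equivalent forms $V_\ell = \int f_\ell\, p\,\bar\rho = \beta\int \phi_{\ell-1} p\,\bar\rho$ in~\eqref{e:vel_j}. The nontrivial normalization $\int \phi_j\bar\rho = -\sum_{r=1}^j \int f_r\phi_{j-r}\bar\rho$ comes from imposing the exact centering condition $\int \phi\,\rho_\beta = 0$ order by order, i.e. $\int (\sum_j F^j\phi_j)(\sum_k F^k f_k)\bar\rho = 0$, and collecting the $F^j$ coefficient.

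Finally, for the diffusion coefficient formulas I would expand the two representations~\eqref{e:deff} and~\eqref{e:deff1}. Writing $D = \int (p-U)\phi\,\rho_\beta = \int (p-U)\,\phi\,\bar\rho\sum_k F^k f_k$ and substituting the series for $U$, $\phi$, I would collect powers of $F$; the $n=0$ term reproduces $\int p\,\phi\,\bar\rho$, which by~\eqref{e:vel_j} equals $\beta^{-1}V_{\ell+1}$-type contributions, and the remaining convolution terms define $\Sigma_{n\ell} = \int p\,\phi_{\ell-n}f_n\,\bar\rho$. For the $\Xi$ form I would start from the manifestly positive representation $D = \gamma\beta^{-1}\int(\partial_p\phi)^2\rho_\beta$ or, more directly, differentiate $U$ to get $\beta^{-1}\,dU/dF = \beta^{-1}\sum \ell F^{\ell-1}V_\ell$ and reconcile the two expansions; the difference between the $\Sigma$ and $\Xi$ bookkeeping is an integration by parts in $p$ (sending $p\,f_n \mapsto \beta^{-1}\partial_p f_n$ up to the weight), giving~\eqref{e:deff_nj2}. \textbf{The main obstacle} I anticipate is not any single identity but the consistent two-variable bookkeeping: correctly tracking the Cauchy products of the $F$-series for $\rho_\beta$, $\phi$, and $U$ simultaneously, and in particular verifying that the solvability conditions at each order are met (so the hierarchy is well-posed) and that the centering constants propagate correctly. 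The symmetry statement ($V_n=0$ for even $n$, etc.) would follow at the end from tracking parity under $(p,q)\mapsto(-p,-q)$ through the hierarchy, observing that $a^+$, $a^-$, and $\cL_0$ have definite parity behavior so that $f_j, \phi_j$ have parity $(-1)^j$ in an appropriate sense, forcing the stated vanishings.
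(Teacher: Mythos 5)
Your proposal follows the same overall architecture as the paper's proof: the conjugation ansatz $\rho_\beta = \bar{\rho}\sum_j F^j f_j$ giving $-\hat\cL_0 f_j = a^+ f_{j-1}$, the expansion of the cell problem giving $-\cL_0\phi_j = a^-\phi_{j-1} - V_j$ with the centering condition propagated order by order, and Cauchy products for $D$. However, it has a genuine gap exactly at the one point where the proof is nontrivial. In the $\phi$-hierarchy the constants $V_j$ are \emph{not} free parameters that Fredholm solvability can ``pin down'': the right-hand side of the cell problem \eqref{e:cell} contains the actual drift $U$, whose expansion coefficients are already fixed by the density hierarchy as $V_j = \int_{\bX} p f_j \bar{\rho}\, dp\, dq$. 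Well-posedness of $-\cL_0 \phi_j = a^- \phi_{j-1} - V_j$ therefore requires \emph{proving} the identity $\int_{\bX} p f_j \bar{\rho}\, dp\, dq = \langle a^- \phi_{j-1}, \mathbf{1}\rangle_\beta$, i.e.\ the equality of the two expressions in \eqref{e:vel_j}. Your integration by parts only relates $\langle a^-\phi_{j-1},\mathbf{1}\rangle_\beta$ to $\beta\int \phi_{j-1}\, p\, \bar{\rho}$; it never connects either quantity to $\int p f_j \bar{\rho}$, which lives in the \emph{other} hierarchy. This is precisely the content of the paper's Lemma~\ref{lemma:vel}: writing $f_j = \hat\cK^j \mathbf{1}$ and $a^-\phi_j = \cK\,\mathbb{P}\, a^-\phi_{j-1}$, where $\cK = a^-(-\cL_0)^{-1}$, $\hat\cK$ is its $\cH$-adjoint, and $\mathbb{P}$ projects onto mean-zero functions, one proves by induction the shift relation \eqref{eq:shiftphif}, $\langle a^-\phi_m, f_{k-m}\rangle_\beta = \langle a^-\phi_0, f_k\rangle_\beta$; the induction must moreover be intertwined with solvability, since $\phi_{j}$ is not even defined until solvability at order $j$ is established.

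The same shift relation is also indispensable for your final step, the passage from \eqref{e:deff_coeffs1} to \eqref{e:deff_coeffs2}; it is not a mere integration by parts. Writing $p\bar{\rho} = -\beta^{-1}\partial_p\bar{\rho}$ and integrating by parts gives $\Sigma_{n\ell} = \beta^{-1}\langle a^-\phi_{\ell-n}, f_n\rangle_\beta + \Xi_{n\ell}$, so at order $\ell$ one obtains $\ell$ cross terms $\beta^{-1}\langle a^-\phi_{\ell-n}, f_n\rangle_\beta$, $n = 1,\dots,\ell$. To assemble the claimed $\beta^{-1}\, dU/dF = \beta^{-1}\sum_{\ell} (\ell+1) F^{\ell} V_{\ell+1}$, every one of these cross terms must be identified with $\beta^{-1}V_{\ell+1}$ --- which is exactly \eqref{eq:shiftphif} with $m = \ell - n$, $k = \ell$, combined with \eqref{e:solvab}. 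Without Lemma~\ref{lemma:vel}, neither the well-posedness of the hierarchy nor the reconciliation of the two expansions for $D$ can close. So your outline is the right one, but the two steps you dismissed as bookkeeping (solvability at each order, and the $\Sigma$-to-$\Xi$ conversion) both hinge on one nontrivial duality identity that your proposal neither states nor proves.
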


\begin{remark}
The expansion formulas for the drift and velocity are consistent with the exact statistical reflection symmetry $ q \rightarrow - q $, $ p \rightarrow -p $, and $ F \rightarrow - F $ in the stochastic system~\eqref{e:sdepq} or infinitesmal generator~\eqref{e:generator} when the potential is symmetric:  $ V(q) = V(-q) $.  Since the drift is odd and the diffusivity even under reflection, this implies that the coefficients $ V_{\ell} = 0 $ when $ \ell $ is even and the coefficients $ \Sigma_{n \ell} = \Xi_{n \ell} = 0 $ when $ \ell $ is odd.  One can verify that our formulas do indeed have these vanishing properties under symmetry of the potential, noting that the operators $ a^+ $ and $ a^- $ are odd under reflection, whereas $ \cL_0 $ and $ \hat\cL_0 $ are even.  By uniqueness of solutions of the Poisson equations~\eqref{e:ll*} and the symmetry properties of the operators, inhomogeneity, and auxiliary conditions, we first verify by induction that the functions $ f_j $ have even reflection symmetry for even $ j $ and odd reflection symmetry for odd $ j $.  Then we similarly induce that the functions $ \phi_j $ have odd reflection symmetry for even $j $ and even reflection symmetry for odd $ j $.  Finally, $ \bar{\rho} $~\eqref{e:undens} has manifestly even symmetry under reflection symmetry.  Therefore, when $ \ell $ is even, $ V_{\ell} $ can be checked to be the periodic integral of an odd function and when $ \ell $ is odd, $ \Xi_{n \ell} $ and $ \Sigma_{n \ell} $ are periodic integrals of odd functions, and so vanish. 
\end{remark}

Using the notation that we have introduced in this section, Einstein's formula (linear response theory) can be written in the form
$$
D_{0}= \beta^{-1} V_1. 
$$
However, formula~\eqref{e:deff_coeffs1} shows that it is not true that a similar simple relation holds for higher order terms in the expansions for the drift and the diffusion coefficients. In particular, {\bf it is not true} that
\begin{equation} \label{e:d_lin_resp_assumption}
D_{n}=\beta^{-1}  V_{n+1}, \quad n=1, \dots, 
\end{equation}
but instead there is a non-trivial correction to~\eqref{e:d_lin_resp_assumption} that is given by the second term on the right hand side of~\eqref{e:deff_coeffs1}.  As an example, we present the formula for the diffusion coefficient that is valid up to $\mathcal{O}(F^{3})$:
\begin{eqnarray}\label{e:deff_correct}
D &=& \beta^{-1} V_{1} + F \Big( \beta^{-1} V_{2} + \int_{\bX} p \phi_{0} f_{1} \bar{\rho}  \, dp dq \Big)
\\ & + & \qquad 
       F^{2} \Big( \beta^{-1} V_{3} + \int_{\bX} p \phi_{1} f_{1} \bar{\rho} \, dp dq + \int_{\bX} p \phi_{0} f_{2} \bar{\rho} \, dp dq \Big) + \mathcal{O}(F^{3}). \nonumber
\end{eqnarray}
Notice that the calculation of the next two terms in the expansion for the diffusion coefficient requires the solution of an additional Poisson equation, in order to compute $\phi_{1}$, as well as the calculation of three additional integrals.

Similarly, \textbf{it is not true} that the Einstein relation~\eqref{e:einstein} can be extended away from $ F = 0 $ in the form:
\begin{equation}
D (F) = \beta^{-1} \frac{d U (F)}{d F}, \label{e:exeinstein}
\end{equation}
because of the presence of correction terms in Eq.~\eqref{e:deff_coeffs2}. This issue is investigated numerically in the next section, see Figures~\ref{u_linear_response} and~\ref{d_linear_response}.  The relation Eq.~\eqref{e:exeinstein} was indeed hypothesized in~\cite{CostantiniMarchesoni1999}, but~\cite{reimann_al02} showed through analytical and numerical studies that while it seems qualitatively correct, and is quantitatively correct in the three  limits $ F \rightarrow 0 $, $ F \rightarrow \infty $, and $ \beta \rightarrow \infty $, it is not quantitatively accurate for general parameter values.  Our results in Proposition~\ref{prop:detailed} give quantitative formulas for this discrepancy, for example, through third order:
\begin{eqnarray}
D &=& \beta^{-1} \frac{d U(F)}{d F} + F \beta^{-1} \int_{\bX} \phi_0 \partial_p f_1 \bar{\rho} \, dp dq 
\label{eq:EinstXiex}\\
& + & \qquad F^2 \beta^{-1}  \Big(\int_{\bX} \phi_0 \partial_p f_2 \bar{\rho} \, dp dq
+ \int_{\bX} \phi_1 \partial_p f_1 \bar{\rho} \, dp dq \Big).
\end{eqnarray}

The violation of the Einstein relation for $ F \neq 0 $ in the model under consideration, and other nonequilibrium steady-state models, was recently analyzed by~\cite{BaiesiMaes11} from a different nonperturbative perspective, expressing the correction terms with respect to various time-correlation functions of the dynamics.  But as we discuss in Appendix~\ref{sec:appmaes}, our framework based on perturbation expansions of the equations from homogenization theory appear to yield more easily computable expressions. We remark also that~\cite{KomOlla2005} have examined deviations from the Einstein relation in the context of stochastic tracer dynamics in a random environment. 

{\it Proof of Prop.~\ref{prop:detailed}.} We start with the analysis of the stationary Fokker-Planck equation~\eqref{e:fp_station}. We set
\begin{equation}\label{e:f}
\rho_{\beta}(p,q) = \bar{\rho} (p,q) f(p,q)
\end{equation}
We substitute~\eqref{e:f} into~\eqref{e:fp_station} and use the symmetry and antisymmetry of $\cS$ and $\cA$, respectively as well as equation~\eqref{e:cre_annh} to obtain
\begin{equation}\label{e:f_eqn}
\Big(\hat\cL_{0} + F a^+ \Big) f = 0.
\end{equation}
Equation~\eqref{e:f_eqn} is posed on $\bX := \T \times \R$ and is equipped with periodic boundary conditions with respect to $q$ as well as the normalization condition
$$
\int_{\bX} f \bar{\rho} \, dp dq =1.
$$
We look for a solution to~\eqref{e:f_eqn} as a power series expansion in $F$:
\begin{equation}
f(p,q) = \sum_{j=0}^N F^j f_j (p,q). \label{e:fexp}
\end{equation}
The normalization condition becomes
$$
\sum_{j=0}^N F^j \int_{\bX} f_j (p,q) \bar{\rho} (p,q) \, dp dq = 1.
$$
This condition has to be satisfied for all $F \in \R$ which implies that the following normalization conditions should be satisfied
\begin{equation}\label{e:normaliz}
\langle f_0 , {\bf 1}\rangle_\beta =1, \quad \langle f_j , {\bf 1}\rangle_\beta = 0, \;\; j=1, 2, \dots
\end{equation}
We substitute the expansion for $f$ into~\eqref{e:f_eqn} to obtain the sequence of equation
\begin{subequations}\label{e:f_j_eqns}
\begin{eqnarray}
\hat\cL_0 f_0 &=& 0,
\\
\label{e:f_j}
-\hat\cL_0 f_j &=& a^+ f_{j-1}, \quad j=1,2, \dots
\end{eqnarray}
\end{subequations}
The above equations are of the form
\begin{equation}\label{e:poisson}
-\hat\cL_0 \psi =u.
\end{equation}
The null space of $\hat\cL_0$, as well as its $ \mathcal{H}$-adjoint $\cL_0$ is one-dimensional and consists of constants. Consequently, the solvability condition for equations of the form \eqref{e:poisson} is that
\begin{equation}\label{e:solvability}
\langle {\bf 1}, u \rangle_\beta = 0.
\end{equation}
Provided that the solvability condition~\eqref{e:solvability} is satisfied, the Poisson equation~\eqref{e:poisson} has a unique mean zero solution, $\langle\psi , {\bf 1}\rangle_\beta = 0$. We correspondingly define the operator $ (-\hat{\mathcal{L}}_0)^{-1} $ on the subspace of functions
satisfying~\eqref{e:solvability} to be this unique mean zero solution.

From the first equation in~\eqref{e:f_j_eqns} and the normalization condition we deduce that
\begin{equation}
f_0 = 1. \label{e:f0}
\end{equation}
The properties of the operators $a^{\pm}$ immediately yield that the solvability condition is satisfied for all equations for $f_j, \, j=1, \dots$:
$$
\langle a^+ f_{j-1}, {\bf 1} \rangle_\beta = \langle f_{j-1}, a^- {\bf 1} \rangle_\beta = 0.
$$
The solution of Equations~\eqref{e:f_j} can be written as
$$
f_j = (-\hat\cL_0)^{-1} a^+ f_{j-1}= \hat\cK f_{j-1},
$$
where $\hat{\mathcal{K}}$ is the $\mathcal{H}$-adjoint of $\mathcal{K}:=a^{-}\left( -\mathcal{L}_0^{-1}\right)$. Consequently,
\begin{equation}\label{e:psi_soln}
f_j = \hat\cK^j {\bf 1}, \quad j=0,1, \dots
\end{equation}

Thus, we have obtained a power series expansion for the invariant distribution in powers of $F$:
\begin{equation}\label{e:rho_expansion}
\rho(p,q;F) = \bar{\rho}(p,q)  \left( 1 + \sum_{j \geq 1} F^j \hat\cK^j {\bf 1} \right)
\end{equation}
from which we immediately deduce the expansion for the effective drift:
\begin{eqnarray}
V &=&  \sum_{j \geq 1} F^j  \langle p, f_j \rangle_{\beta}  \nonumber \\ & = & 
\sum_{j \geq 1} F^j  \langle p, \hat \cK_j {\bf 1} \rangle_{\beta}  
\nonumber \\ &=&  
\beta^{-1}\sum_{j \geq 1} F^j  \langle a^+ {\bf 1}, \hat \cK^j {\bf 1} \rangle_{\beta} 
\nonumber \\ &=& 
\beta^{-1}\sum_{j \geq 1} F^j   \langle{\bf 1} , a^- \hat \cK^j {\bf 1} \rangle_{\beta}. \label{e:V_expansion}
\end{eqnarray}
In particular:
\begin{equation}\label{e:vj}
V_{j} = \beta^{-1} \langle{\bf 1} , a^- \hat \cK^j {\bf 1} \rangle_{\beta}.
\end{equation}

Now we proceed with the analysis of the Poisson equation~\eqref{e:cell} which, in view of~\eqref{e:V_expansion},~\eqref{e:f},~\eqref{e:fexp}, and~\eqref{e:f0} , can be written as 
\begin{equation}\label{e:cell_2}
- \cL \phi = p - \sum_{j \geq 1} F^j V_j, \quad \left\langle \phi, \sum_{j \geq 0} F^j f_j \right\rangle_{\beta}  = 0,
\end{equation}
with $V_j$ given by~\eqref{e:vj}. The generator of the perturbed dynamics is
$$
\cL = \cL_0 + F a^-,
$$
where $\cL_0$ is given by~\eqref{e:L0}. We look for a solution of~\eqref{e:cell_2} in the form of a power series expansion in $F$:
$$
\phi(p,q) = \sum_{j \geq 0} F^j \phi_j(p,q).
$$
We substitute this expansion into Equation~\eqref{e:cell_2} to obtain the sequence of equations (recalling from Eq.~\eqref{e:f0} that $ f_0 = 1 $):

\begin{subequations}\label{e:phi_j_eqns}
\begin{eqnarray}\label{e:phi_0}
-\cL_0 \phi_0 & = & p, \quad \langle  \phi_{0} , {\bf 1} \rangle_{\beta} = 0 \\
\label{e:phi_j}
-\cL_0 \phi_j & = & a^- \phi_{j-1} - V_j, \; \langle  \phi_{j} , {\bf 1} \rangle_{\beta} = 
- \sum_{r=1}^j \langle f_r, \phi_{j-r} \rangle_{\beta} \; j=1,2, \dots
\end{eqnarray}
\end{subequations}
Equation~\eqref{e:phi_0} is precisely the Poisson equation for the unperturbed dynamics $F = 0$. Now we show that the solvability condition~\eqref{e:solvability} is satisfied for equations~\eqref{e:phi_j}. We need to show that
\begin{equation}\label{e:solvab}
V_{j} = \langle a^-\phi_{j-1}, {\bf 1} \rangle_\beta.
\end{equation}
\begin{lemma}\label{lemma:vel}
The solvability condition~\eqref{e:solvab} is satisfied for all $ j \geq 1 $, and moreover the relation
\begin{equation} \label{eq:shiftphif}
\langle a^{\minus} \phi_{0}, f_{k} \rangle_{\beta} = 
\langle a^{\minus} \phi_{m}, f_{k-m} \rangle_{\beta} \text{ for }  k=m,m+1,\ldots
\end{equation}
holds for all $ m \geq 0 $.
\end{lemma}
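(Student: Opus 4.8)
The plan is to establish both assertions of Lemma~\ref{lemma:vel} together by a single induction on $m$, because they are interdependent: the shift identity~\eqref{eq:shiftphif} presupposes that $\phi_m$ is well defined, while the very solvability condition~\eqref{e:solvab} that makes $\phi_m$ exist is most naturally checked using the shift identity one level down. I would package the induction hypothesis as the statement that $\phi_0,\dots,\phi_m$ are well defined (i.e.\ \eqref{e:solvab} holds through order $m$) and that $\langle a^{-}\phi_m, f_{k-m}\rangle_{\beta} = \langle a^{-}\phi_0, f_k\rangle_{\beta}$ for every $k \geq m$. The base case $m=0$ is vacuous for the shift identity and holds by construction of $\phi_0$ in~\eqref{e:phi_0}.

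The engine of the whole argument is a single \emph{transfer identity}: for $k \geq m \geq 1$,
\[
\langle a^{-}\phi_m, f_{k-m}\rangle_{\beta} = \langle a^{-}\phi_{m-1}, f_{k-m+1}\rangle_{\beta}.
\]
I would prove this by moving operators across the inner product. First use the $\cH$-adjointness of $a^{\pm}$ to rewrite the left side as $\langle \phi_m, a^{+} f_{k-m}\rangle_{\beta}$; then replace $a^{+} f_{k-m}$ by $-\hat\cL_0 f_{k-m+1}$ via~\eqref{e:l*} (legitimate since $k-m+1 \geq 1$); then carry $\hat\cL_0$ back onto $\phi_m$ using that it is the $\cH$-adjoint of $\cL_0$; finally substitute $-\cL_0\phi_m = a^{-}\phi_{m-1} - V_m$ from~\eqref{e:phi_j}. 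The constant $V_m$ contributes a remainder $-V_m\langle \mathbf{1}, f_{k-m+1}\rangle_{\beta}$, which vanishes by the normalization~\eqref{e:normaliz} precisely because the index $k-m+1$ is at least $1$. Iterating this identity downward in $m$ (the relevant $f$-index never drops below $1$ along the way since $k \geq m$) telescopes to the shift identity~\eqref{eq:shiftphif}.

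For the solvability condition I would then specialize. Taking $m=j-1$ and $k=j-1$ in the already established shift identity gives $\langle a^{-}\phi_{j-1}, \mathbf{1}\rangle_{\beta} = \langle a^{-}\phi_{j-1}, f_0\rangle_{\beta} = \langle a^{-}\phi_0, f_{j-1}\rangle_{\beta}$, using $f_0=\mathbf{1}$. It remains to identify the right side with $V_j$: moving $a^{-}$ onto $f_{j-1}$, using $a^{+} f_{j-1} = -\hat\cL_0 f_j$ from~\eqref{e:l*}, transferring $\hat\cL_0$ onto $\phi_0$, and invoking $-\cL_0\phi_0 = p$ from~\eqref{e:phi_0} gives $\langle a^{-}\phi_0, f_{j-1}\rangle_{\beta} = \langle p, f_j\rangle_{\beta} = V_j$, the last equality being the definition~\eqref{e:vel_j}. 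This is exactly~\eqref{e:solvab}, so the equation for $\phi_j$ is solvable and $\phi_j$ is well defined; the transfer identity at level $j$ then extends the shift identity to $m=j$, closing the induction.

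The only genuinely delicate point is the circular dependence just described, which is why I would avoid proving the two claims in isolation and instead interleave them: each new $\phi_m$ is legitimized by the shift identity at level $m-1$, and only then does the transfer identity at level $m$ (hence the shift identity at level $m$) become available. Beyond this, the argument is pure bookkeeping of adjoint moves and of keeping the relevant $f$-index $\geq 1$ so that~\eqref{e:normaliz} eliminates the $V_m$ remainder; no analytic estimates are needed, since existence and uniqueness of mean-zero solutions of the relevant Poisson equations were already secured by the Fredholm-type discussion following~\eqref{e:poisson}.
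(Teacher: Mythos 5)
Your proposal is correct and takes essentially the same route as the paper: both proofs run an induction on $m$ that interleaves the solvability condition \eqref{e:solvab} with the shift identity \eqref{eq:shiftphif}, with each inductive step carried out by moving adjoint operators across $\langle \cdot,\cdot\rangle_{\beta}$ and disposing of constants via mean-zero/normalization conditions. The only difference is presentational: your \emph{transfer identity} pushes $a^{+}$ and then $\hat\cL_0$ onto the $\phi$ side and invokes the Poisson equations \eqref{e:phi_j} together with \eqref{e:normaliz} to kill the constant $V_m$, whereas the paper writes $\phi_{M+1}=(-\cL_0)^{-1}\mathbb{P}\,a^{-}\phi_{M}$ plus a constant and pushes $\cK = a^{-}(-\cL_0)^{-1}$ across to become $\hat\cK$ raising the index of $f$ --- the same adjoint computation read in the opposite direction.
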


\begin{proof}
Our strategy pivots on the observation that if we can establish (\ref{eq:shiftphif})
for  $ m=0,1,2,\ldots,M $,
then the solvability condition~\eqref{e:solvab} follows for $ j =1,2,\ldots,M+1 $:
\begin{eqnarray*}
V_j &=& \beta^{-1} \langle \mathbf{1},a^{\minus} \hat{\mathcal{K}}^j \mathbf{1} \rangle_{\beta}
= \beta^{-1} \langle \mathcal{K} a^{\plus} \mathbf{1}, \hat{\mathcal{K}}^{j-1} \mathbf{1} \rangle_{\beta} \\
&=& \beta^{-1} \langle a^{\minus}  (- \mathcal{L}_0)^{-1} a^{\plus} \mathbf{1}, \hat{\mathcal{K}}^{j-1} \mathbf{1} \rangle_{\beta} = \langle a^{\minus} \phi_0, f_{j-1} \rangle_{\beta} 
=   \langle a^{\minus} \phi_{j-1},f_0 \rangle_{\beta} \\
&=& 
\langle a^{\minus} \phi_{j-1}, \mathbf{1} \rangle_{\beta}, 
\end{eqnarray*}
using Eq.~(\ref{eq:shiftphif}) with $ m= j-1 $ in the penultimate equality.  

We now proceed to establish (\ref{eq:shiftphif}) inductively for $ m=0,1,2,\ldots $.  
 The case $ m= 0 $ is trivial.  Suppose now Eq.~(\ref{eq:shiftphif}) has been shown for all  $ m=0,1,2,\ldots,M $; we will show that (\ref{eq:shiftphif}) also follows for $ m=M+1 $.  To this end, it is useful to introduce the operator $ \bar{\mathbb{P}} $ which projects orthogonally onto the hyperplane in $ \mathcal{H} $ orthogonal to constants:
\begin{equation*}
\bar{\mathbb{P}} g := g - \langle \boldsymbol{1}, g \rangle_{\beta}.
\end{equation*}
Since, by the above argument and the induction hypothesis, the solvability condition for \eqref{e:phi_j} is satisfied for $ j=1,2,\ldots,M+1$, we can write 
\begin{equation*}
  \phi_{M+1} = ( - \mathcal{L}_0)^{-1} \mathbb{P} a^{\minus} \phi_{M} -  \sum_{r=1}^{M+1} \langle f_r, \phi_{M+1-r} \rangle_{\beta},
  \end{equation*}
where the second sum of constants is included to meet the side condition in Eq.~\eqref{e:cell}, as we have defined $ (- \mathcal{L}_0)^{-1} $ to yield a mean zero solution.  But the operator $ a^{\minus} $ will kill these constants,   
  and therefore, for $ k=M+1,M+2,\ldots$, we can write:
\begin{eqnarray*}
\langle a^{\minus} \phi_{M+1}, f_{k-M-1} \rangle_{\beta}
&=& \langle a^{\minus} ( - \mathcal{L}_0)^{-1} \mathbb{P} a^{\minus} \phi_{M}, f_{k-M-1} \rangle_{\beta} \\
&=& \langle \mathcal{K} \mathbb{P} a^{\minus} \phi_{M},f_{k-M-1} \rangle_{\beta} 
= \langle a^{\minus} \phi_M, \mathbb{P} \hat{\mathcal{K}} f_{k-M-1} \rangle_{\beta} \\
&=& \langle a^{\minus} \phi_M, \hat{\mathcal{K}} f_{k-M-1} \rangle_{\beta}
= \langle a^{\minus} \phi_M,  f_{k-M} \rangle_{\beta} \\
& =& \langle a^{\minus} \phi_0,  f_{k} \rangle_{\beta}. 
\end{eqnarray*}
In the penultimate equality, we used $ \mathbb{P} \hat{\mathcal{K}} = \hat{\mathcal{K}} $ from the fact that by  definition of $(- \hat{\mathcal{L}}_0)^{-1}$, $ \mathbb{P} (- \hat{\mathcal{L}}_0)^{-1} = (- \hat{\mathcal{L}}_0)^{-1} $; the final equality follows from the induction hypothesis.
\end{proof}

Now we derive~\eqref{e:deff_coeffs1}. Using the centering condition in~\eqref{e:cell} we have that the diffusion coefficient is given by 
\begin{eqnarray*}
 D &=&  \int_{\bX} \! p \, \phi \, \rho_{\beta} \, dp dq
   \\ & = & 
    \sum_{\ell \geq 0} \sum_{n \geq 0} F^{n + \ell} \int_{\bX} \! p \, \phi_{\ell} \, f_{n} \bar{\rho} \, dp dq \\
    & = & 
    \sum_{r=0}^{\infty} \sum_{n=0}^{r} F^{r} \int_{\bX} \! p \, \phi_{r-n} \, f_{n} \bar{\rho} \, dp dq \\
& = & \sum_{r=0}^{\infty}F^r \left[\int_{\bX} \!  p \, \phi_{r} \, f_{0} \bar{\rho} \, dp dq
+ \sum_{n=1}^{r} \int_{\bX} \! p \, \phi_{r-n} \, f_{n} \bar{\rho} \, dp dq\right] \\
& = &  \sum_{r=0}^{\infty}  F^r \left[\beta^{-1} \langle \phi_r,a^{+} \boldsymbol{1}\rangle_{\beta}
+ \sum_{n=1}^{r}  \int_{\bX} \! p \, \phi_{r-n} \, f_{n} \bar{\rho} \, dp dq\right] \\
& = &  \sum_{r=0}^{\infty}  F^r \left[\beta^{-1} V_{r+1} + \sum_{n=1}^{r} \Sigma_{nr}\right],
\end{eqnarray*}
yielding (\ref{e:deff_coeffs1}). 

We can also alternatively restructure this expansion as follows, using the relations \eqref{e:solvab} and \eqref{eq:shiftphif}: 
\begin{eqnarray*}
D & = &  
    \sum_{r=0}^{\infty} \sum_{n=0}^{r} F^{r} \int_{\bX} \! p \, \phi_{r-n} \, f_{n} \bar{\rho} \, dp dq \\
  & = &  \sum_{r=0}^{\infty} \sum_{n=0}^{r} F^{r} \beta^{-1} \langle \phi_{r-n},(a^{+}+ a^{-}) f_{n} \rangle_{\beta} \\
    &= & \beta^{-1} \sum_{r=0}^{\infty} F^{r} \sum_{n=0}^{r} \left[\langle a^{-}\phi_{r-n}, f_{n} \rangle_{\beta} + 
    \langle \phi_{r-n},a^{-} f_{n} \rangle_{\beta}\right] \\
    &=& \beta^{-1} \sum_{r=0}^{\infty} F^r \sum_{n=0}^{r} \left[\langle a^{-}\phi_r, f_{0} \rangle_{\beta} + 
    \langle \phi_{r-n},a^{-} f_{n} \rangle_{\beta}\right]
     \\
    &= & \beta^{-1} \sum_{r=0}^{\infty} F^{r} 
    \sum_{n=0}^{r} \left[  V_{r+1} +   \langle \phi_{r-n},a^{-} f_{n} \rangle_{\beta}\right]
\\
 &= & \beta^{-1} \sum_{r=0}^{\infty} F^{r} 
   \left[(r+1) V_{r+1} +  \sum_{n=0}^{r}  \langle \phi_{r-n},a^{-} f_{n} \rangle_{\beta}\right] \\
    &= & \beta^{-1} \frac{d U}{d F} + 
     \sum_{r=1}^{\infty} F^r \sum_{n=1}^{r} \Xi_{nr},
         \end{eqnarray*}
establishing the statement~\eqref{e:deff_coeffs2} in the proposition.  In the last equality, we used that $ a^{-} f_0 = 0 $.
\qed

\begin{figure}[!ht]
\begin{center}
\includegraphics[width=0.95\textwidth]{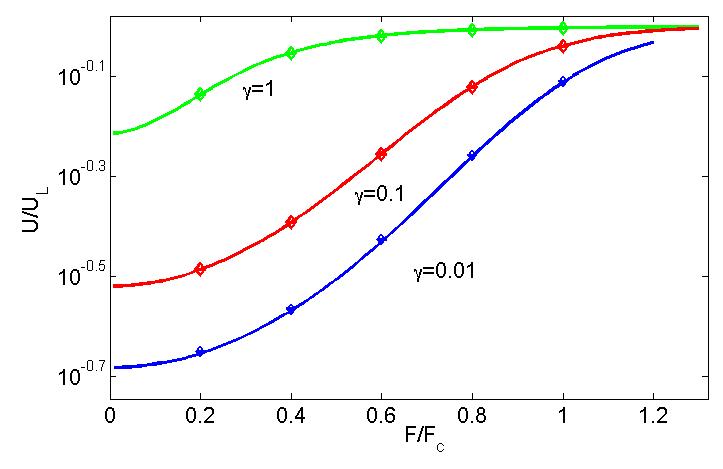}
\end{center}
\caption{\textbf{Solid Lines}: $U$ as a function of $F$ for $\gamma=0.01$, $0.1$ and $1$. \textbf{Markers}: Monte Carlo estimates. Other parameters of the simulations are $ V_0= \pi^2/16 $, $ \beta = 1.2 V_{0}^{-1} $, and $ L= 2 \pi $.  Following convention, the force variable is scaled by the critical force $F_c \approx 3.36 \gamma \sqrt{V_0} $ at which the effective potential~\eqref{e:effpot} becomes monotonic, and the drift is scaled by the value $ U_L = F/\gamma $ it would have in absence of the periodic potential $ \phi (x)$. For the Monte Carlo simulations we use an Euler-Maruyama scheme with a time step $\Delta t=0.1$ integrating over $N=5000000$ time steps (after $100000$ time steps of a transient integration interval) and averaging over $5000$ trajectories.} \label{f:u_small_gamma}
\end{figure}

\begin{figure}[!ht]
\begin{center}
\includegraphics[width=0.97\textwidth]{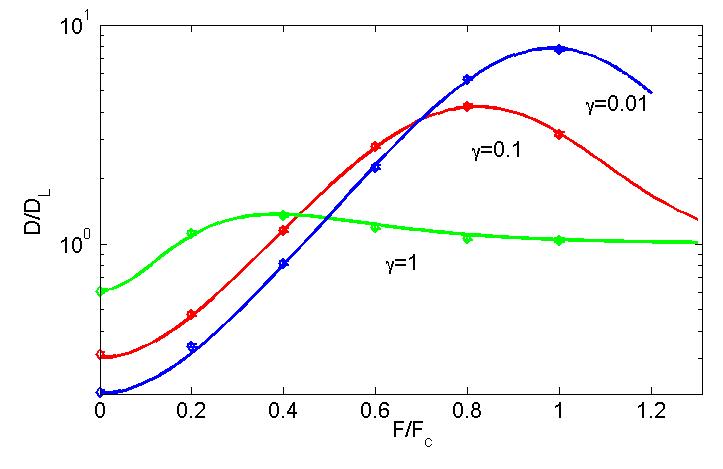}
\end{center}
\caption{\textbf{Solid Lines}: $D$ as a function of $F$ for  $\gamma=0.01$, $0.1$ and $1$. \textbf{Markers}: Monte Carlo estimates. Parameters of the simulations are as in Figure 1, with diffusivity now scaled by the value $ D_L = (\beta \gamma)^{-1} $  it would have in absence of the periodic potential $ \phi (x) $.   } \label{f:d_small_gamma}
\end{figure}

%
%
%
\section{Numerical Simulations}
\label{sec:numerics}

In this section we present results of numerical simulations that corroborate the theoretical results presented in the previous section. The calculation of the drift and diffusion coefficients is based on the numerical solution of the hypoelliptic boundary value problems~\eqref{e:fp_station} and~\eqref{e:cell} as well as the calculation of the integrals~\eqref{e:drift_defn} and~\eqref{e:deff}. Both PDEs  are solved using a spectral method that relies on the expansion of the solution of the stationary Fokker-Planck and the Poisson equations in a Fourier-Hermite expansion. This method is adapted from Risken's continued fraction expansion method~\cite{Ris84}; see also~\cite{FokGuoTang2002}. This method was used previously in the study of the diffusion coefficient for a Brownian particle in a periodic potential in~\cite{PavlVog08}. Details about the numerical method can be found in Appendix~\ref{sec:num_alg}.

\begin{figure}[!htp]
\centering
\begin{tabular}{c}
\includegraphics[width=0.85\textwidth]{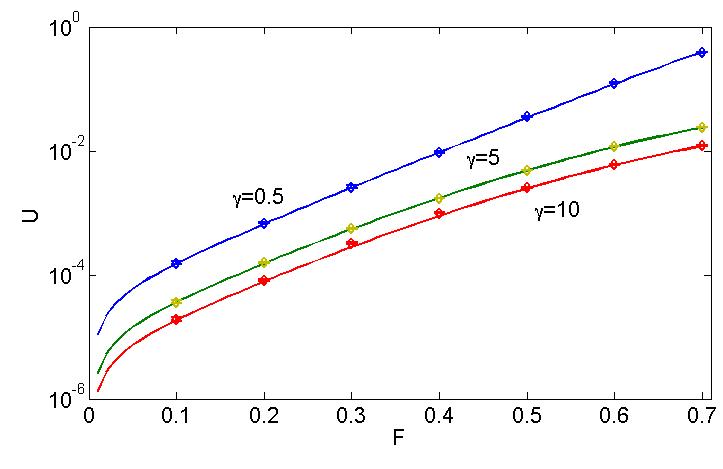}\\
a.~~ $U$ as a function of $F$ for  $\gamma=0.5, 5$ and $10$.\\
\includegraphics[width=0.85\textwidth]{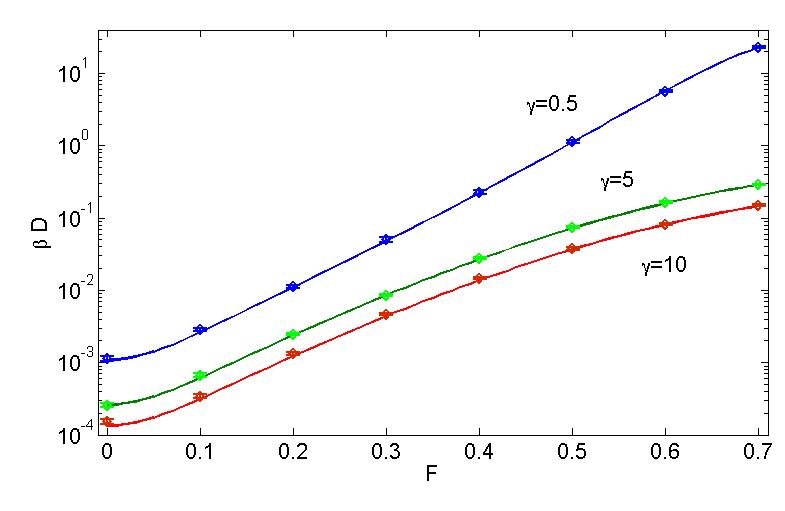} \\
  b.~~ $D$ as a function of $F$ for  $\gamma=0.5, 5$ and $10$.
\end{tabular}

  \caption{(a) \textbf{Solid lines}: $U$ computed from \eqref{e:drift_defn}. (b) \textbf{Solid lines}: $D$ computed from \eqref{e:deff}. \textbf{Markers}: Monte Carlo estimates. Parameters of the simulations are $ V_0= 1 $, $ \beta = 5$, and  $ L= 1$.  For the Monte Carlo simulations we have used a time step $\Delta t = 0.01$ over $6000000$ time steps and averaging over $6000$ trajectories.}\label{fig:linear_response}
  \end{figure}

 In all the numerical experiments we use a cosine potential, $V(q) = V_0\cos(\omega_1 q)$, with $\omega_1=2\pi/L$. 
As a first test for the validity of our numerical method, in Figures \ref{f:u_small_gamma}, \ref{f:d_small_gamma} and \ref{fig:linear_response} we compare the results obtained from the solution of the two PDEs with results obtained using Monte Carlo simulations. In particular, in Figures \ref{f:u_small_gamma} and \ref{f:d_small_gamma} we reproduce the results reported in~\cite{CostantiniMarchesoni1999} for $\gamma=0.01$ and go beyond this for larger values of $\gamma$. In all the Monte Carlo simulations reported in this paper we take a sufficiently large number of realizations, a sufficiently small time step and sufficiently long paths so that the results of the simulations are very accurate.\footnote{In fact, in all the figures where the results of Monte Carlo simulations are presented, we also include the error bars. However, they are so small that they are barely visible.} Details on the values of the parameters used in the simulations can be found in the caption figures. In Figure \ref{fig:linear_response} we present results for $U$ and $D$ for larger values of $\gamma$.

We emphasize the fact that the spectral method enables us to calculate the drift and diffusion coefficients very accurately for a very wide range of values of the friction coefficient $\gamma$ as well as the forcing $F$. As expected, the numerical method becomes computationally more expensive as $\gamma$ decreases, since more Hermite and Fourier modes are needed for the accurate calculation of the diffusion coefficient. We note also that, in two and higher dimensions, 
the underdamped regime requires appropriate preconditioning for the efficient solution of the resulting linear algebraic problem.  

\begin{figure}[!htp]
 \begin{center}
 \includegraphics[width=0.95\textwidth]{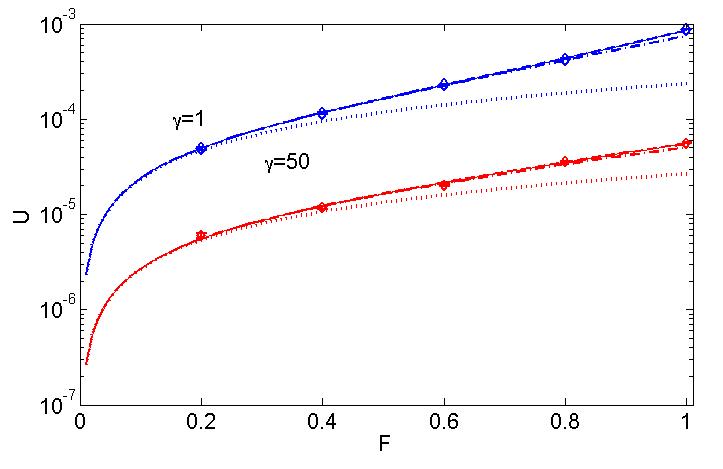}
\caption{$U$ for $\gamma=1$ (blue lines) and $\gamma=50$ (red lines). \textbf{Solid lines}: $U$ computed from \eqref{e:drift_defn}. \textbf{Markers}: Monte Carlo estimates. Broken lines: Expansion for $U$  in terms of $F$ by solving numerically for the coefficients $V_j$ given by \eqref{e:vel_j}. $\gamma=1$: \textbf{Dots}: linear approximation, \textbf{Dash}: 5th-order expansion, \textbf{Dash-Dash}: 9th-order expansion (overlapped with solid line). $\gamma=50$: \textbf{Dots}: linear approximation, \textbf{Dash-dot}: 3rd-order expansion, \textbf{Dash-Dash}: 5th-order expansion (overlapped with solid line).   Parameters of the simulations are $ V_0= 1 $, $ \beta = 5$, and  $ L= 1$. For the Monte Carlo simulations we have integrated $5000$ trajectories using a time step $\Delta t= 0.01$ over $1000000$ time steps for $\gamma = 1$, while $\Delta t =0.005$ over $4000000$ time steps for $\gamma=50$.} \label{u_linear_response}
\end{center}
\end{figure}

Now we turn our attention to the numerical study of formulas~\eqref{e:expansion1} and~\eqref{e:exeinstein}. In Figure~\ref{u_linear_response} we have calculated numerically the effective drift $U$ using~\eqref{e:drift_defn}, and we have also calculated numerically the coefficients $V_n$ in~\eqref{e:expansion1}. For this we need to solve the Poisson equations~\eqref{e:l*}, where the generator of the unperturbed dynamics, i.e. with $F=0$, appears. We can see that as we increase the number of terms in the power series expansion, the series converges to the value of $u$ computed from solving the stationary Fokker-Planck equation~\eqref{e:fp_station} and computing the integral in~\eqref{e:drift_defn}. We stress that, using the expansion~\eqref{e:expansion1} we can calculate the nonequilibrium drift for arbitrary values of the external forcing $F$ using only information from the equilibrium dynamics.

In Figure~\ref{d_linear_response} we plot the diffusion coefficient $D$, as a function of the forcing $F$, using~\eqref{e:deff}. In addition, we plot the power series expansions of different orders according to assumption~\eqref{e:exeinstein} that linear response relationships between drift and diffusion extend to the higher order coefficients. The drift $U$ is computed as in Figure~\ref{u_linear_response}, using the expansion~\eqref{e:expansion1}.  While the power series expansion does match the value of $ D $ for $ F=0 $, as it should according to linear response theory, the series does not converge to the values computed numerically for $ F\neq 0 $ using the spectral method described in Appendix~\ref{sec:num_alg}.
This shows in particular that the correction terms  $ \Xi_{k \ell} $ in Eq.~\eqref{e:deff_nj2} are nontrivial, as also evidenced by the results of Figure 1 in~\cite{reimann_al02}. To emphasize this point, we also compare in Figure~\ref{d_linear_responseNumericDerivative} the effective diffusion coefficient with formula~\eqref{e:exeinstein}, where $U$ is computed using~\eqref{e:drift_defn} and the derivative with respect to $F$ is approximated numerically using a centered finite difference scheme. See Figure~\ref{d_linear_responseNumericDerivative}. The linear response relations, however, do perform better for larger values of $ \gamma $.
\begin{figure}[!htp]
 \begin{center}
 \includegraphics[width=0.95\textwidth]{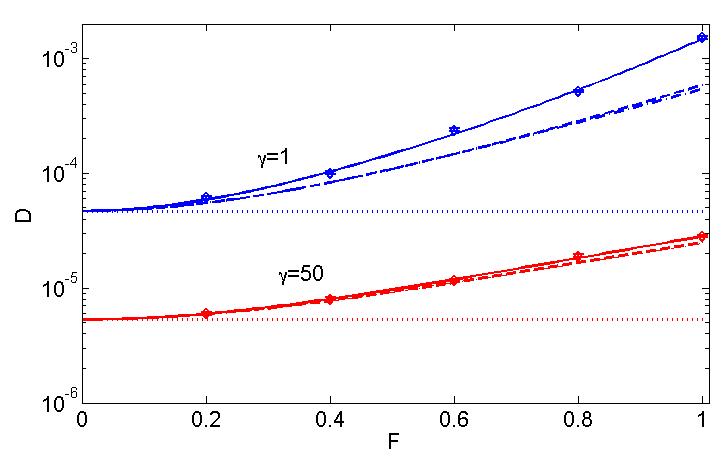}
\caption{$D$ for $\gamma=1$ (blue lines) and $\gamma=50$ (red lines). \textbf{Solid lines}:Homogenization formula~\eqref{e:deff}. \textbf{Markers}: Monte Carlo estimates. Broken lines: Expansion for $D$  in terms of $F$ assuming~\eqref{e:exeinstein}  and using the drift expansion~\eqref{e:expansion1}. $\gamma=1$:  \textbf{Dots}: constant approximation, \textbf{Dash}: 4th-order expansion, \textbf{Dash-Dot}: 8th-order expansion. $\gamma=50$: \textbf{Dots}: constant approximation, \textbf{Dash}: 2nd-order expansion, \textbf{Dash-Dot}: 6th-order expansion. Parameters of the simulation are as in Figure~\ref{u_linear_response}.} \label{d_linear_response}
\end{center}
\end{figure}
\begin{figure}[!htp]
 \begin{center}
 \includegraphics[width=0.95\textwidth]{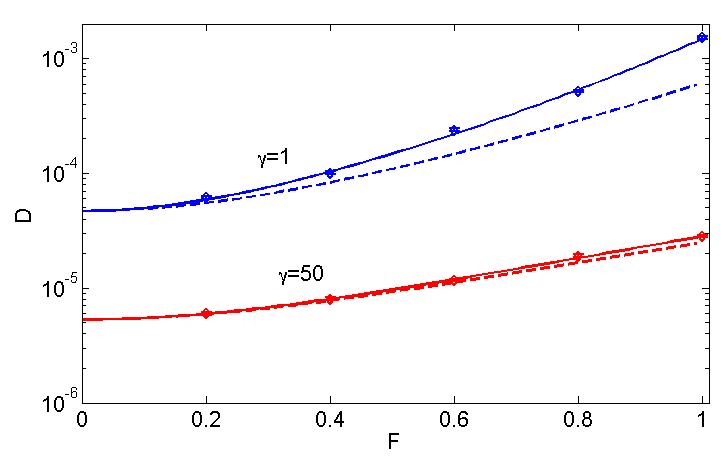}
\caption{$D$ for $\gamma=1$ (blue lines) and $\gamma=50$ (red lines).  \textbf{Solid lines}: Homogenization formula~\eqref{e:deff}. \textbf{Broken lines}: Computation of $D$ assuming the Einstein relation~\eqref{e:exeinstein}, with derivative evaluated through centered finite differences of the drift computed from the homogenization formula~\eqref{e:drift_defn}. Parameters of the simulation as in Figure~\ref{u_linear_response}.} \label{d_linear_responseNumericDerivative}
\end{center}
\end{figure}

Finally, we investigate the overdamped limit. The drift and diffusion coefficients of an overdamped particle moving in a one dimensional periodic potential under constant external force  can be computed analytically in terms of quadratures. The exact formula for the effective drift is computed in~(\cite{Straton58},\cite[Ch. 9]{Straton67}), whereas the exact formula for the diffusion coefficient can be found in~\cite{reimann_al02}  and~\cite{pavl05}.

Expanding~\eqref{e:fp_station} and~\eqref{e:cell} in inverse powers of $\gamma$ we obtain
\begin{equation} \label{e:v_overdamped_approx}
 U=\frac{U_{\mathrm{O}}}{\gamma}+O\left(\frac{1}{\gamma^3}\right)
\end{equation}
and
\begin{equation} \label{e:d_overdamped_approx}
 D=\frac{D_{\mathrm{O}}}{\gamma}+O\left(\frac{1}{\gamma^3}\right),
\end{equation}
where $U_{\mathrm{O}}$ and $D_{\mathrm{O}}$ denote the drift and diffusion coefficients for the overdamped problem (with $ \gamma $ scaled out, as described by the generator~\eqref{eq:overfpe} below). Rather than computing the integrals in the formulas for $U_{\mathrm{O}}$ and $D_{\mathrm{O}}$ (see, e.g.~\cite[Eqn. B.6 and Eqn. B.9]{pavl05}) we solve the stationary Fokker-Planck and the Poisson equations~\cite{PavlVog08} :
\begin{equation}
 U_{\mathrm{O}}=\int_0^L \!\! \left(-V'(q)+F\right) \rho_{\mathrm{O}}(q)\de q, \quad \cL_{\mathrm{O}}^*\rho_{\mathrm{O}}=0,
\end{equation}
with $\cL_{\mathrm{O}}^*$ the adjoint (Fokker-Planck operator) of the generator of the overdamped dynamics
\begin{equation}
 \cL_{\mathrm{O}}=\left(-V'(q)+F\right)\frac{\partial}{\partial q} + \beta^{-1}\frac{\partial^{2}}{\partial q^{2}}.
 \label{eq:overfpe}
\end{equation}
The generator is posed on $[0,L]$ equipped with periodic boundary conditions. Similarly, the diffusion coefficient is given by 
\begin{displaymath}
 D_{\mathrm{O}}=\beta^{-1}\int\!\! \left(1+\partial_q \phi_{\mathrm{O}}(q)\right)\rho_{\mathrm{O}}(q)\de q,
\end{displaymath}
with
\begin{displaymath}
- \cL_{\mathrm{O}}\phi_{\mathrm{O}}(q)=-V'(q)+F,
\end{displaymath}
on $[0,L]$ with periodic boundary conditions. Higher order corrections in~\eqref{e:v_overdamped_approx} and~\eqref{e:d_overdamped_approx} can be obtained through the solution of further auxiliary Poisson equations.   As shown in Figure~\ref{fig:overdamped_approx} , the overdamped formulas for the drift and diffusion coefficients offer a very accurate approximation even for moderately high values of the friction coefficient, uniformly in $F$.  

\begin{figure}
\centering
\begin{tabular}{c}
\includegraphics[width=0.85\textwidth]{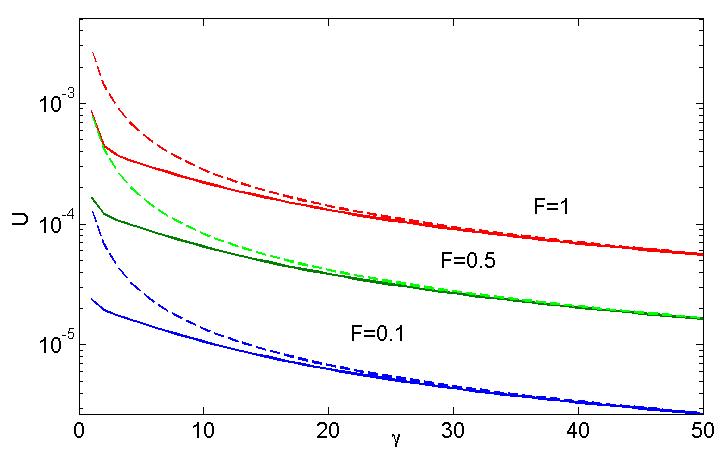}\\
a.~~ $U$ as a function of $\gamma$.\\
\includegraphics[width=0.85\textwidth]{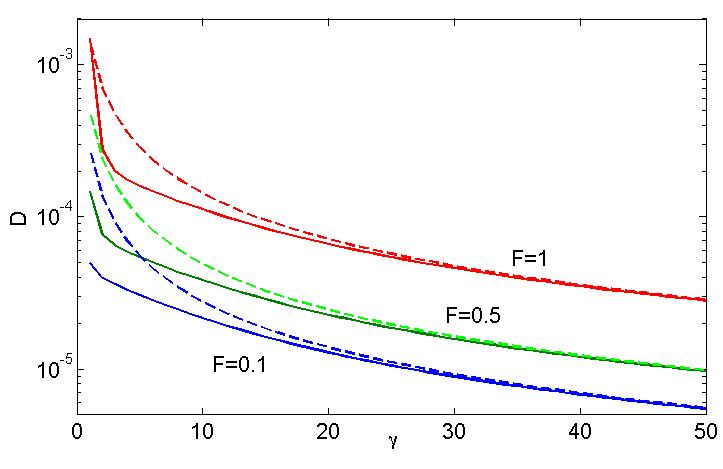} \\
  b.~~ $D$ as a function of $\gamma$.
\end{tabular}

\label{fig:overdamped_approx_d}   \caption{(a) \textbf{Solid lines}: Effective drift $U$ computed from~\eqref{e:drift_defn}. \textbf{Broken lines}: Asymptotic approximation (\ref{e:v_overdamped_approx}). (b) \textbf{Solid lines}: Effective diffusivity $D$  computed from~\eqref{e:deff}. \textbf{Broken lines}: Asymptotic approximation (\ref{e:d_overdamped_approx}).  Parameters of the simulations are $V_0=1$,  $\beta=5$, and $L=1$. }\label{fig:overdamped_approx}
  \end{figure}

%
%
%
%
\section{Conclusions}
\label{sec:conclusions}

Using the framework of homogenization theory and multiscale analysis, we have developed a systematic expansion of the effective drift and effective diffusivity for the nonequilibrium dynamics of a particle in a tilted periodic potential.  The coefficients in this expansion relate the nonequilibrium transport coefficients to statistical averages involving the equilibrium dynamics (with no imposed tilt), computed through the solutions of boundary value problems for deterministic partial differential equations of hypoelliptic type.   The expansions give a detailed description of how Einstein's relation between the diffusivity and mobility of a particle is violated in higher orders with respect to the perturbation from equilibrium.
Our theoretical results were confirmed by numerical simulations based on a new  efficient spectral method for the solution of Poisson equations for the generator of the Langevin dynamics.

Our method of analysis can be readily extended, with suitable elaboration of notation, to multiple dimensions. Other substantial directions for future research include the application of the homogenization procedure to 
multiscale and locally periodic potentials, as well as to time-dependent external forcing.  This last setting could have particular relevance to the study of stochastic resonance phenomena.

\bigskip

\noindent{\bf Acknowledgments} 
This work is supported by the DFG Research Center {\sc Matheon} ``Mathematics for Key Technologies'' (FZT86) in Berlin. The research of G.A.P. is partially supported by the EPSRC, Grant No. EP/H034587.   PRK wishes to thank the Zentrum f\"{u}r Interdisziplin\"{a}re Forschung (ZiF) for its hospitality and support during its ``Stochastic Dynamics:  Mathematical Theory and Applications" program, during which part of this research was completed.  The research of J.C.L. was partially supported by NSF DMS-0449717.
%
%
%
%
\appendix

\section{Numerical Algorithm}
\label{sec:num_alg}
In this appendix we present a numerical approach for solving the 1-dimensional stationary Fokker-Planck equation \eqref{e:fp_station} together with the cell problem \eqref{e:cell} for computing $V$ and $D$ via \eqref{e:drift_defn} and \eqref{e:deff}. This numerical method is based on the approach by \cite{Ris84} and consists in a spectral decomposition of the solution of (\ref{e:fp_station}) and \eqref{e:cell} in terms of Hermite polynomials and Fourier series, followed by a recursive method to solve the resulting system of algebraic equations. Since this approach is presented in \cite{Ris84} for finding numerically $\rho_\beta(q,p)$ and $U$, we will focus on the computation of $D$ via the solution for the auxiliary field $\phi(q,p)$ in equation (\eqref{e:cell}) and equation (\ref{e:deff}).
\subsection{Solution in terms of Hermite polynomials.}
The cell problem for the auxiliary field $\phi(q,p)$  can be written in terms of the infinitesimal generator of the Ornstein-Uhlenbeck (OU) process as introduced in Section \ref{sec:lin_resp},
\begin{displaymath}
\cS=-p\partial_p+\beta^{-1}\partial^{2}_{p},
\end{displaymath}
as
\begin{displaymath}
\cL\phi(q,p)=p\partial_q\phi+(-V'(q)+F)\partial_p\phi+\gamma \cS\phi=U-p,
\end{displaymath}
with $U$ the effective drift as given by \eqref{e:drift_defn}. Note that the invariant distribution $\hat{\rho}(p)$ of the OU process, $\cS^*\hat{\rho}(p)=0$, implies $\hat{\rho} \sim \EXP{-\beta p^2/2}$. In view of the structure of the previous equation, we use the following representation for its solution,
\begin{equation}\label{sol_cell}
\phi(q,p)=\sum_{n=0}^\infty \, \phi_n(q)H_n(p),
\end{equation}
where $\phi_n(q)$ is a series of functions to be determined. $H_n(p)$ are rescaled Hermite polynomials 
 \begin{eqnarray*}
H_n (p) &=& \frac{1}{\sqrt{n!}} \mathrm{He}_n (p \beta^{1/2}), \\ 
\mathrm{He}_n (x) &=& (-1)^n e^{x^2/2} \frac{d^n}{dx^n} e^{-x^2/2},
\end{eqnarray*} 
which are the eigenfunctions of the operator $\cS$,
\begin{displaymath}
\cS H_n(p)=-nH_n(p), \quad n=1,2,\ldots
\end{displaymath}
Also, these rescaled Hermite polynomials are orthonormal with respect to the unperturbed stationary distribution:
\begin{displaymath}
\langle H_n(p) H_m(p)\rangle_{\hat{\rho}}=\int_{-\infty}^\infty H_n(p) H_m(p) \bar{\rho}(p) \, \dy =\delta_{nm},
\end{displaymath}
and satisfy the  relations :
\begin{eqnarray*}
p H_n (p) &=& \beta^{-1/2}\left(\sqrt{n+1} H_{n+1} (p) + \sqrt{n} H_{n-1} (p)\right), \\
H_n^{\prime} (p) &=& (\beta n)^{1/2} H_{n-1} (p).
\end{eqnarray*}
Upon substituting (\ref{sol_cell}) into (\eqref{e:cell}), projecting against $ H_0 $, $H_1 $, and $H_n $ for $ n \geq 2 $ respectively , and using the orthonormality property of the Hermite polynomials, we obtain the following infinite system of ordinary differential equations for $\phi_n(q)$, 
\begin{displaymath}
\begin{array}{rl}
\phi_1'(q)+\beta\,(-V'(q)+F)\phi_1(q)&=\ds \sqrt{\beta}\,U, \vspace{7pt} \\
\ds \phi_0'(q)-\gamma\ds\sqrt{\beta}\,\phi_1(q)+\beta\left(-V'(q)+F\right) \sqrt{2} \,\phi_2(q) +\sqrt{2}\,\phi_2'(q)&=-1, \vspace{7pt} \\
\sqrt{n}\,\phi_{n-1}'(q)-\gamma\ds\sqrt{\beta}\,n \,\phi_n(q) +\sqrt{n+1}\,\phi_{n+1}'(q)& \vspace{7pt} \\
\qquad \qquad +\beta\left(-V'(q)+F\right) \sqrt{n+1}\,\phi_{n+1}(q) &=0,
\end{array}
\end{displaymath}
for $n=2,3, \ldots$
\subsection{Spectral decomposition.}
Since the solution to the cell problem must be periodic in $q$, the auxiliary functions $\phi_n(q)$ must also be periodic. It is natural then to express these functions in terms of their Fourier series,
\begin{displaymath}
\phi_n(q)=\sum_{j=-\infty}^\infty \, \Phi_n^j \, \EXP{i \omega_j q}, \quad \omega_j=\frac{2\pi j}{L}.
\end{displaymath}
For simplicity, we will focus now on the simplest periodic potential, namely, $$V(q)=V_0\rm{cos}(2\pi q /L),$$ although more complex potentials can be studied. In terms of this potential, the equations take the following form,
\begin{eqnarray}\label{spectral}
i\omega_j \Phi_1^j+\beta\,\left(\frac{V_0\omega_1}{2i}\left(\Phi^{j-1}_1-\Phi^{j+1}_1\right)+F\Phi^j_1\right)&=&\sqrt{\beta}\,U\delta_{j,0}, \notag \vspace{7pt}\\
i\,\omega_j \Phi^j_0-\gamma\sqrt{\beta}\,\Phi^j_1+ && \vspace{18pt} \notag \\
\sqrt{2}\left(i\omega_j\,\Phi^j_2+\beta\,\left(\frac{V_0\omega_1}{2i}\left(\Phi^{j-1}_2-\Phi^{j+1}_{2}\right)+F\Phi_{2}^j\right)\right)&=&-\delta_{j,0}, \notag \vspace{7pt} \\
i\sqrt{n}\,\omega_j \Phi_{n-1}^j-\gamma\sqrt{\beta}\,n\,\Phi_n^j+ && \vspace{18pt} \notag  \\
\sqrt{n+1}\left(i\omega_j\,\Phi_{n+1}^j+\beta\left(\frac{V_0\omega_1}{2i}\left(\Phi_{n+1}^{j-1}-\Phi_{n+1}^{j+1}\right)+F\Phi_{n+1}^j\right)\right)&=&0,\\
\textrm{for } n=2,3,\ldots, \quad j=0,\pm 1, \pm 2, \ldots && \notag
\end{eqnarray}
\subsection{Solution of $\phi_n$.}
We now proceed to describe the numerical algorithm for computing $D$. In order to solve \eqref{e:cell} in its spectral representation (\ref{spectral}), we approximate $\phi_n(q)$ by a Galerkin  truncation of the Fourier series after the $M$th  term,
\begin{displaymath}
\phi_n(q) \approx \sum_{j=-M}^M \, \Phi_n^j \, \EXP{i \omega_j q}.
\end{displaymath}
The infinite system of algebraic equations (\ref{spectral}) becomes then an infinite, tri-diagonal system of equations expressed as follows. 
By explicitly writing the real and imaginary parts of $\Phi_n^j=\xi_n^j+i \eta_n^j$ and using the fact that the solution must be real-valued (which implies that $\xi_{n}^{-j}=\xi_n^j$ and $\eta_{n}^{-j}=-\eta_n^j$ ) we form the vectors,
\begin{displaymath}
\boldsymbol{\Phi}_n=\left( \begin{array}{c}
\xi_n^0 \vspace{6pt}\\
\xi_n^1 \\
\vdots \\
\xi_n^M\vspace{6pt}\\
\eta_n^1 \\
\vdots \\
\eta_n^M
\end{array}\right), \quad n=0,1,2,\ldots 
\end{displaymath}
This representation leads to the following system of equations,
\begin{subequations}
\begin{eqnarray}
\boldsymbol{Q}_0^-\boldsymbol{\Phi}_{1}&=&\boldsymbol{B}, \vspace{7pt} \label{eq:qno} \\
\boldsymbol{Q}_1^+\boldsymbol{\Phi}_{0}+\boldsymbol{Q}_1\boldsymbol{\Phi}_{1}+\boldsymbol{Q}_1^-\boldsymbol{\Phi}_{2}&=&\boldsymbol{A},\vspace{7pt} \label{eq:qn1}\\
\ds \boldsymbol{Q}_n^+\boldsymbol{\Phi}_{n-1}+\boldsymbol{Q}_n\boldsymbol{\Phi}_{n}+\boldsymbol{Q}_n^-\boldsymbol{\Phi}_{n+1}&=&\boldsymbol{0}, \quad n=2,3,\ldots \label{eq:qn2}
\end{eqnarray}
\end{subequations}
These matrices are given, for $n=0,1,\ldots$, by,
\begin{displaymath}
\boldsymbol{Q}_n=-\gamma\, \sqrt{\beta}\, n \boldsymbol{I}_{2M+1},
\end{displaymath}
where $\boldsymbol{I}_{k}$ is the $k$ x $k$ identity matrix. For $n=1,2,\ldots$ we have,
\begin{displaymath}
\boldsymbol{Q}_n^+=\sqrt{n} \left( \begin{array}{cccccc|cccc}
 &           &      &  & & & 0&0     & \ldots   &    0       \vspace{7pt} \\              
 &          &   &        &      & &  -\omega_1 &0&\ldots       &  0      \vspace{7pt} \\
   &           & & \boldsymbol{0}&& &  \vdots         &\ddots & &\vdots       \vspace{7pt} \\
&       &    &       &     &  & \ldots  & 0&      & -\omega_M  \vspace{7pt} \\
\hline
0 & \omega_1 &   0   &0& \ldots& 0&&&                                  &    \vspace{7pt} \\       
0& 0  &\omega_2  &0    &\ldots & 0&&\boldsymbol{0}&                                  &          \vspace{7pt} \\
\vdots &&\ddots&& \vdots &&&&& \vspace{7pt} \\
 0&  0     &   & \ldots& 0 & \omega_M &&&                                  &
\end{array} \right), \vspace{10pt}
\end{displaymath}
\begin{displaymath}
\boldsymbol{Q}_n^-=\sqrt{n+1}\left(\begin{array}{c|c}
\boldsymbol{Q}_{aa} & \boldsymbol{Q}_{ab} \vspace{6pt}\\
\hline \vspace{-7pt}\\ 
\boldsymbol{Q}_{ba} & \boldsymbol{Q}_{bb}
\end{array}\right), \vspace{10pt}
\end{displaymath}
\begin{displaymath}
\boldsymbol{Q}_{aa}=F\, \beta\, \boldsymbol{I}_{M+1}, \quad \boldsymbol{Q}_{bb}=F\, \beta\, \boldsymbol{I}_{M}
\end{displaymath}
\begin{displaymath}
\boldsymbol{Q}_{ab}=\left(\begin{array}{cccccc}
-\beta\,V_o\,\omega_1 & 0 &0&0&\ldots &0 \vspace{6pt}\\
 -\omega_1 & -\beta\,V_o\,\omega_1/2 &0&0&\ldots &0 \vspace{6pt}\\
\beta\,V_o\,\omega_1/2 & -\omega_2 & -\beta\,V_o\,\omega_1/2 &0&\ldots&0 \\
\ddots&\ddots &\ddots& & &\vdots \vspace{20pt}\\
0 &\ldots &&0&\beta\,V_o\,\omega_1/2 & -\omega_M 
\end{array}\right), \vspace{10pt}
\end{displaymath}
\begin{displaymath}
\boldsymbol{Q}_{ba}=\left(\begin{array}{ccccccc}
-\beta\,V_o\,\omega_1/2 & \omega_1 & \beta\,V_o\,\omega_1/2 &0&0&\ldots &0 \vspace{6pt}\\
0 & -\beta\,V_o\,\omega_1/2 & \omega_2 & \beta\,V_o\,\omega_1/2 &0 &\ldots&0 \vspace{6pt}\\
\vdots &\ddots&\ddots &\ddots&\vdots&&\vdots \vspace{20pt}\\
0 &\ldots && &0&-\beta\,V_o\,\omega_1/2 & \omega_M 
\end{array}\right). \vspace{10pt}
\end{displaymath}
\begin{displaymath}
[\boldsymbol{B}]_k=\sqrt{\beta}\,U\delta_{k,1}, \quad [\boldsymbol{A}]_k=-\delta_{k,1},
\end{displaymath}
where $[\boldsymbol{B}]_k$ represents the $k$th  element of the vector $\boldsymbol{B}$ (respectively for $\boldsymbol{A}$.)  In order to solve the infinite system of algebraic equations, we impose some boundary condition of the form $\boldsymbol{\Phi}_{N+1}=\boldsymbol{S}_N\boldsymbol{\Phi}_N$, for large $N$. Tested boundary conditions include $\boldsymbol{S}_N=\boldsymbol{0}$ (Dirichlet boundary condition), which we employed in the simulations in Section~\ref{sec:numerics} , and $\boldsymbol{S}_N=\boldsymbol{I}_{2M+1}$ (Neumann boundary condition). Defining matrices $\{ \boldsymbol{S}_n\}_{n=0}^{N-1} $ recursively downwards from $ n = N-1 $ by
\begin{equation*}
\boldsymbol{S}_{n} = - (\boldsymbol{Q}_{n+1}+ \boldsymbol{Q}_{n+1}^{-} \boldsymbol{S}_{n+1})^{-1} \boldsymbol{Q}_{n+1}^{+}
\text{ for } n=0,\ldots,N-1,
\end{equation*}
we can check by induction (again downwards)  that for $ n=1,\ldots,N$,
\begin{equation}
\boldsymbol{\Phi}_{n+1} = \boldsymbol{S}_{n} \boldsymbol{\Phi}_{n} \label{e:phisinduct}
\end{equation}
Indeed, this relation is already in force for $ n=N $, and assuming it to be true for some $ n=m \geq 2$, from Eq.~\eqref{eq:qn2} we find: 
\begin{displaymath}
\boldsymbol{Q}_m^+\boldsymbol{\Phi}_{m-1}+\boldsymbol{Q}_m\boldsymbol{\Phi}_{m}+\boldsymbol{Q}_m^-\boldsymbol{S}_m\boldsymbol{\Phi}_{m}=\boldsymbol{Q}_m^+\boldsymbol{\Phi}_{m-1}+\left(\boldsymbol{Q}_m+\boldsymbol{Q}_m^-\boldsymbol{S}_m\right)\boldsymbol{\Phi}_{m}=0, 
\end{displaymath} 
so that Eq.~\eqref{e:phisinduct} holds for $ n=m-1 $ as well.  
Turning now to Eqs.~\eqref{eq:qno} and~\eqref{eq:qn1}, we have
\begin{subequations}
\begin{eqnarray}
\boldsymbol{Q}_1^+\boldsymbol{\Phi}_0+\left(\boldsymbol{Q}_1+\boldsymbol{Q}_1^-\boldsymbol{S}_1\right)\boldsymbol{\Phi}_1&=&\boldsymbol{A} \vspace{7pt}  \\
\boldsymbol{Q}_0^-\boldsymbol{\Phi}_1&=&\boldsymbol{B}, \label{eq:phi1s}
\end{eqnarray}
\end{subequations}
from which we find by solving for $ \boldsymbol{\Phi}_1 $ in terms of $ \boldsymbol{\Phi}_0 $:
\begin{equation*}
\boldsymbol{\Phi}_1 = \boldsymbol{S}_0 \boldsymbol{\Phi}_0 +\left(\boldsymbol{Q}_1+\boldsymbol{Q}_1^-\boldsymbol{S}_1\right)^{-1} \boldsymbol{A}.
\end{equation*}
Substituting this expression into Eq.~\eqref{eq:phi1s}, we finally obtain a closed equation for $ \boldsymbol{\Phi}_0 $:
\begin{equation}
\boldsymbol{Q}_0^- \boldsymbol{S}_0\boldsymbol{\Phi}_0=\boldsymbol{B} - \boldsymbol{Q}_0^-\left(\boldsymbol{Q}_1+\boldsymbol{Q}_1^-\boldsymbol{S}_1\right)^{-1}\,\boldsymbol{A}. \label{e:phios}
\end{equation}
 The matrix $\boldsymbol{Q}_0^-\boldsymbol{S}_0$  will have one null eigenvalue (corresponding to the null space of $\cL$).   One can verify, by considering the analogous numerical solution scheme for $ \rho_{\beta} $ and $ U $, presented in~\cite{Ris84}, that the right hand side of Eq.~\eqref{e:phios} satisfies the solvability condition that it be orthogonal to the left eigenvector of $ \boldsymbol{Q}_0^- \boldsymbol{S}_0 $ with zero eigenvalue.  A unique solution for $ \boldsymbol{\Phi}_0 $ is then obtained by discretization of the auxiliary condition in Eq.~\eqref{e:cell}.  In particular, representing the solution to the stationary Fokker-Planck (\ref{e:fp_station}) by a Hermite polynomial expansion
\begin{displaymath}
\rho_{\beta}(q,p)=\hat{\rho}(p)\sum_{n=0}^\infty R_n(q)H_n(p),
\end{displaymath}
and approximating the functions $ R_n (q) $ by a finite Fourier series, with coefficients organized into vectors $ \boldsymbol{R}_n $ analogously to Eq.~\eqref{sol_cell}, this auxiliary condition reads:
\begin{equation*}
\sum_{n=0}^{N} \Big(2 \boldsymbol{R}_n^{T} \boldsymbol{\Phi}_n - \left[\boldsymbol{R}_n^{T} \boldsymbol{\Phi}_n\right]_1 \Big)= 0.
\end{equation*}
This then determines, with Eq.~\eqref{e:phios},  $\boldsymbol{\Phi}_0$  from which he remaining $\{\boldsymbol{\Phi}_n\}_{n=1}^N$ are found recursively using the matrices $\boldsymbol{S}_n$ and the relations~\eqref{e:phisinduct}. Once the vectors $\boldsymbol{\Phi}_{i}$ are found, $D$ is easily computed by replacing the proposed solution for $\rho_{\beta}(q,p)$ and $\phi(q,p)$ in (\ref{e:deff}) and using the Hermite polynomial properties to obtain:
\begin{displaymath}
\begin{array}{rl}
D=& \ds L\sum_{n=0}^N \, \sqrt{\frac{n+1}{\beta}}\Big(2\boldsymbol{R}_{n+1}^T\boldsymbol{\Phi}_{n}-[\boldsymbol{R}_{n+1}^T\boldsymbol{\Phi}_{n}]_1+2\boldsymbol{R}_{n}^T\boldsymbol{\Phi}_{n+1}  -[\boldsymbol{R}_{n}^T\boldsymbol{\Phi}_{n+1}]_1\Big)
\end{array}
\end{displaymath}

\section{Alternative Approach to Obtaining Corrections to Einstein's Formula}
\label{sec:appmaes}
The relation between the diffusivity and mobility is expressed in~\cite{BaiesiMaes11} as follows (in our notation):
\begin{equation}
D = \beta^{-1} \frac{dU}{dF} +\lim_{T \rightarrow  \infty} \frac{1}{2 \gamma} 
\int_0^T \left\langle \frac{(q (T) - q (0)}{T}; - V^{\prime} (q(t)) + F \right\rangle \, d t ,
\label{e:maes}
\end{equation}
where $ \langle g ; h  \rangle = \langle g h \rangle - \langle g \rangle \langle h \rangle $ and $ \langle \cdot \rangle $  denotes an average over the stochastic noise (and possibly random initial conditions).  The correction term was studied in~\cite{BaiesiMaes11} on the model system (\ref{e:langevin_intro}) as well as other non-equilibrium systems through direct numerical simulation of the governing dynamical equations and Monte Carlo estimation of the statistical average.  We can express Eq.~\eqref{e:maes} in terms of deterministic operators through the following formal manipulations.  First, we re-express
\begin{equation*}
q(T) - q(0) = \int_0^T p(\tp) \, d \tp,
\end{equation*}
which avoids the complication of working with the nonperiodic variable $ q(t) $.  We then have:
\begin{equation*}
D = \beta^{-1} \frac{dU}{dF} +\lim_{T \rightarrow  \infty} 
\frac{1}{2 \gamma T} 
\int_0^T \int_0^T \left\langle p(\tp); - V^{\prime} (q(t)) + F \right\rangle \, d \tp \, d t
\end{equation*}
Now, thanks to the large factor of $ T $ in the denominator, we may neglect initial transients and evaluate the statistical average in the nonequilibrium steady state, i.e., with single-time statistics governed by the invariant density $ \rho_{\beta}$, the solution of the stationary Fokker-Planck equation \eqref{e:fp_station}.   We then express the two-time correlation function formally using the evolution operator $ e^{\cL \Delta t} $, where $\cL$ denotes the generator of the Langevin dynamics, for the forward-in-time variable, and the projection operator 
\begin{equation*}
\mathbb{P} g =  g - \langle g \rangle_{\rho}
\end{equation*}
where
\begin{equation*}
\langle g \rangle_{\rho} \equiv \int_{\bX} g \rho_{\beta} \, dp \, dq
\end{equation*}
to obtain:
\begin{eqnarray*}
D &=& \beta^{-1} \frac{dU}{dF} - \lim_{T \rightarrow  \infty} 
\frac{1}{2 \gamma T} 
\int_0^T \left[\int_0^t  \left\langle \left(\mathbb{P} p\right) \left(e^{\cL (t-\tp)} \mathbb{P} 
 V^{\prime} (q) \right) \right\rangle_{\rho} \, d \tp \right. \\
& & \qquad  \qquad \left. + \int_t^T \left\langle \left(e^{\cL (\tp-t)} \mathbb{P} p\right) \left( 
 \mathbb{P} V^{\prime} (q) \right) \right\rangle_{\rho} \, d \tp\right] \, dt \\
& &= \beta^{-1} \frac{dU}{dF} - \lim_{T \rightarrow  \infty} 
\frac{1}{2 \gamma T} 
\left[ \left\langle \left(\cL^{-2} (e^{\cL T} - \Id - \cL T) \mathbb{P} p\right) \left(\mathbb{P} 
 V^{\prime} (q)\right) \right\rangle_{\rho} \right. \\
& & \qquad \qquad  \left. +  \left\langle \left( \mathbb{P} p\right) \left( \cL^{-2} (e^{\cL T} - \Id - \cL T)
 \mathbb{P} V^{\prime} (q)\right) \right\rangle_{\rho} \right],
  \end{eqnarray*}
  where $ \Id $ is the identity operator.
Using now the nonpositivity of the operator $ \cL $   and the fact that $ \Ran \mathbb{P} = \Ran \cL $ since $ \mathbb{P} $ projects onto the subspace orthogonal to  the kernel of $ \cL^{*} $, the $ L^2 $ adjoint of $ \cL $, we can evaluate the $ T \rightarrow \infty $ limit to obtain the following formal operator-theoretic equivalent to the formula Eq.~\eqref{e:maes} from~\cite{BaiesiMaes11}:
\begin{equation}
D =  \beta^{-1} \frac{dU}{dF} + \frac{1}{2 \gamma}
\left[ 
  \left\langle \left(\cL^{-1} \mathbb{P} p\right) 
 V^{\prime} (q) \right\rangle_{\rho} 
 +  \left\langle \left( \mathbb{P} p\right) \left( \cL^{-1}
 \mathbb{P} V^{\prime} (q)\right) \right\rangle_{\rho} \right].
 \label{e:maesdetlong}
\end{equation}
A somewhat more compact formula can be obtained by defining the $ L^2 (\rho) $ adjoint of $ \cL$, which can be computed as:
\begin{equation}
\hat{\cL} = \hat{\cL}_0 - F a^{\minus} + 2 \gamma (p + \beta^{-1} \partial_p \ln \rho_{\beta}) a^{\minus},
\label{e:hcl}
\end{equation}
where $ \hat{\cL}_0 $ is defined at the end of Proposition~\ref{prop:detailed},
so that
\begin{equation}
D =  \beta^{-1} \frac{dU}{dF} + \frac{1}{2 \gamma}
  \left\langle \left((\cL^{-1} + \hcL^{-1})\mathbb{P} p\right) 
\mathbb{P}  V^{\prime} (q) \right \rangle_{\rho}. \label{e:maesdetshort}
 \end{equation}
In both of these expressions, we note that $ \mathbb{P}p = p - U $, from Eq.~\eqref{e:drift_defn}.
 
Inspecting expression Eq.~\eqref{e:maesdetlong} for the correction to the Einstein relation, we see that beyond computing $ \rho_{\beta} $ as the stationary solution of the Fokker-Planck equation~\eqref{e:fp_station}, we must solve a Poisson equation of the form~\eqref{e:cell} as well as a second Poisson equation of the form
\begin{equation*}
- \cL \psi = \mathbb{P} V^{\prime}.
\end{equation*}
In the expression Eq.~\eqref{e:maesdetshort}, we must solve a stationary Fokker-Planck equation~\eqref{e:fp_station}, a Poisson equation of the form~\eqref{e:cell}, as well as an adjoint-Poisson equation of the form
\begin{equation*}
- \hcL \eta = \mathbb{P} p.
\end{equation*}
In both cases, it seems that an additional equation would need to be solved beyond the stationary Fokker-Planck equation~\eqref{e:fp_station}
 and a single Poisson equation~\eqref{e:cell} necessary in the homogenization approach.  On the other hand, computing the mobility $ \frac{dU}{dF} $ at general values of the tilt $ F $ from the nonperturbative homogenization equations would require a differentiation between different values of $ F$.  The direct formula~\eqref{e:maes} would generally of course need to be evaluated through Monte Carlo averages involving a large number of sample trajectories run for sufficiently long.
 
The perturbation theory with respect to $ F$ developed for the homogenization equations in Section~\ref{sec:lin_resp} has the virtue of allowing the simultaneous numerical computation of the diffusivity and drift for a range of values of tilt $ F$, rather than one value at a time.  One could introduce similar perturbation expansions with respect to tilt $ F$ into the formulas~\eqref{e:maesdetlong} and~\eqref{e:maesdetshort}.  We attempted to examine whether this would give equivalent results, but found this effort frustrating.  On the one hand, computing Eq.~\eqref{e:maesdetlong} perturbatively would introduce the perturbative series solution to a second Poisson equation completely absent from the homogenization theory, so it would be difficult to relate the results.  Expression~\eqref{e:maesdetshort} has more promise because to leading order, $ \hcL^{-1} $ is identical to the simple operator $ \hat{\cL}_0^{-1} $, which is just a time reversal of the operator $ \cL_0^{-1} $.  However, implementing the perturbation expansion on Eq.~\eqref{e:maesdetshort}, even to first order, produced considerably more unwieldy equations than emerged from the homogenization equations, and again how to relate the resulting expressions was unclear.  The main complication is the propagation of the perturbation expansion~\eqref{e:fexp} for the invariant density through the adjoint operator $ \hcL$~\eqref{e:hcl}.   Perhaps a more clever analysis would provide a linkage between the formula for the correction~\eqref{e:maes} to the Einstein relation from~\cite{BaiesiMaes11}  and the perturbative expansion we have developed in Proposition~\ref{prop:detailed}, but it appears that computations are considerably simpler by conducting the  perturbation expansion on the homogenization equations as we have done in Section~\ref{sec:lin_resp}.

\def\cprime{$'$} \def\cprime{$'$} \def\cprime{$'$} \def\cprime{$'$}
  \def\cprime{$'$} \def\cprime{$'$} \def\cprime{$'$}
  \def\Rom#1{\uppercase\expandafter{\romannumeral #1}}\def\u#1{{\accent"15
  #1}}\def\Rom#1{\uppercase\expandafter{\romannumeral #1}}\def\u#1{{\accent"15
  #1}}\def\cprime{$'$} \def\cprime{$'$} \def\cprime{$'$} \def\cprime{$'$}
  \def\cprime{$'$} \def\cprime{$'$} \def\cprime{$'$}
  \def\polhk#1{\setbox0=\hbox{#1}{\ooalign{\hidewidth
  \lower1.5ex\hbox{`}\hidewidth\crcr\unhbox0}}} \def\cprime{$'$}
  \def\cprime{$'$}

\bibliographystyle{plain}
\end{document}